\begin{document}

\title{Codes between MBR and MSR Points with Exact Repair Property \thanks{Part of this paper was presented at 2013 IEEE Information Theory Workshop, Seville, Spain \cite{exacternvall}.}}

\author{Toni Ernvall \thanks{T. Ernvall is with the Turku Centre for Computer Science, Turku, Finland and with the Department of Mathematics and Statistics, FI-20014, University of Turku, Finland (e-mail:tmernv@utu.fi).}}

\maketitle

\newtheorem{definition}{Definition}[section]
\newtheorem{thm}{Theorem}[section]
\newtheorem{proposition}[thm]{Proposition}
\newtheorem{lemma}[thm]{Lemma}
\newtheorem{corollary}[thm]{Corollary}
\newtheorem{exam}{Example}[section]
\newtheorem{conj}{Conjecture}
\newtheorem{remark}{Remark}[section]

\newcommand{\La}{\mathbf{L}}
\newcommand{\h}{{\mathbf h}}
\newcommand{\Z}{{\mathbf Z}}
\newcommand{\R}{{\mathbf R}}
\newcommand{\C}{{\mathbf C}}
\newcommand{\D}{{\mathcal D}}
\newcommand{\F}{{\mathbf F}}
\newcommand{\HH}{{\mathbf H}}
\newcommand{\OO}{{\mathcal O}}
\newcommand{\G}{{\mathcal G}}
\newcommand{\A}{{\mathcal A}}
\newcommand{\B}{{\mathcal B}}
\newcommand{\I}{{\mathcal I}}
\newcommand{\E}{{\mathcal E}}
\newcommand{\PP}{{\mathcal P}}
\newcommand{\Q}{{\mathbf Q}}
\newcommand{\M}{{\mathcal M}}
\newcommand{\separ}{\,\vert\,}
\newcommand{\abs}[1]{\vert #1 \vert}

\begin{abstract}
In this paper distributed storage systems with exact repair are studied. A construction for regenerating codes between the minimum storage regenerating (MSR) and the minimum bandwidth regenerating (MBR) points is given. To the best of author's knowledge, no previous construction of exact-regenerating codes between MBR and MSR points is done except in the work by Tian et al. On contrast to their work, the methods used here are elementary.

In this paper it is shown that in the case that the parameters $n$, $k$, and $d$ are close to each other, the given construction is close to optimal when comparing to the known functional repair capacity. This is done by showing that when the distances of the parameters $n$, $k$, and $d$ are fixed but the actual values approach to infinity, the fraction of the performance of constructed codes with exact repair and the known capacity of codes with functional repair, approaches to one. Also a simple variation of the constructed codes with almost the same performance is given.
\end{abstract}

\section{Introduction}
\subsection{Regenerating Codes}
In a distributed storage system a file is dispersed across $n$ nodes in a network such that given any $k\, (<n)$ of these nodes one can reconstruct the original file. We also want to have such a redundancy in our network that if we lose a node then any $d\, (<n)$ of the remaining nodes can repair the lost node. We assume that each node stores the amount $\alpha$  of information, \emph{e.g.}, $\alpha$ symbols over a finite field, and in the repair process each repairing node transmits the amount $\beta$ to the new replacing node (called a \emph{newcomer}) and hence the total repair bandwidth is $\gamma=d \beta$. We also assume that $k \leq d$.

The repair process can be either functional or exact. By functional repair we mean that the nodes may change over time, \emph{i.e.}, if a node $v_{i}^{\text{old}}$ is lost and in the repair process we get a new node $v_{i}^{\text{new}}$ instead, then we may have $v_{i}^{\text{old}} \neq v_{i}^{\text{new}}$. If only functional repair is assumed then the capacity of the system, denoted by $C_{k,d}(\alpha,\gamma)$, is known. Namely, it was proved in the pioneering work by Dimakis \emph{et al.} \cite{dimakis} that
$$
C_{k,d}(\alpha,\gamma)=\sum_{j=0}^{k-1} \min \left\{ \alpha, \frac{d-j}{d}\gamma  \right\}.
$$

If the size of the stored file is fixed to be $B$ then the above expression for the capacity defines a trade-off between the node size $\alpha$ and the total repair bandwidth $\gamma$. The two extreme points are called the minimum storage regenerating (MSR) point and the minimum bandwidth regenerating (MBR) point. The MSR point is achieved by first minimizing $\alpha$ and then minimizing $\gamma$ to obtain
\begin{equation}\label{MSR}
\left\{
  \begin{array}{l}
\alpha = \frac{B}{k} \\
\gamma = \frac{d B}{k(d-k+1)}.
  \end{array} \right.
\end{equation}
By first minimizing $\gamma$ and then minimizing $\alpha$ leads to the MBR point
\begin{equation}\label{MBR}
\left\{
  \begin{array}{l}
\alpha = \frac{2d B}{k(2d-k+1)} \\
\gamma = \frac{2d B}{k(2d-k+1)}.
  \end{array} \right.
\end{equation}

In this paper we will study codes that have exact repair property. The concepts of exact regeneration and exact repair were introduced independently in \cite{explicitconst}, \cite{reducingrepair}, and \cite{searchingfor}. Exact repair means that the network of nodes does not vary over time, \emph{i.e.}, if a node $v_{i}^{\text{old}}$ is lost and in the repair process we get a new node $v_{i}^{\text{new}}$, then $v_{i}^{\text{old}} = v_{i}^{\text{new}}$. We denote by
$$
C_{n,k,d}^{\text{exact}} (\alpha,\gamma)
$$
 the capacity of codes with exact repair with $n$ nodes each of size $\alpha$, with total repair bandwidth $\gamma$, and for which each set of $k$ nodes can recover the stored file and each set of $d$ nodes can repair a lost node.

We have by definition that
$$
C_{n,k,d}^{\text{exact}} (\alpha,\gamma) \leq C_{k,d} (\alpha,\gamma).
$$

\subsection{Related Work}
It was proved in \cite{kumar}, \cite{MSRequal}, \cite{optimalMDS}, and \cite{yhteensaMDS} that the codes with exact repair achieve the MSR point and in \cite{kumar} that the codes with exact repair achieve the MBR point. The impossibility of constructing codes with exact repair at essentially all interior points on the storage-bandwidth tradeoff curve was shown in \cite{nonachievability}. Other papers studying exact-regenerating codes in MSR point include \emph{e.g.} \cite{permutation}, \cite{hadamard}, \cite{cadambeoptimalMDS}, and \cite{tamozigzag}. Locally repairable codes that achieve repair bandwidth that falls below the time-sharing trade-off of the MSR and MBR points are studied in \cite{localcodes}.

To the best of author's knowledge, no previous construction of exact-regenerating codes between MBR and MSR points is done except in \cite{exactrepairtian}. Our construction is very different to that. We do not use complex combinatorial structures but instead exploit some optimal codes in MSR point. However, we require in our construction that storage symbols can be split into a sufficiently large number of subsymbols.

Tian has shown in \cite{region433} that there exists a non-vanishing gap between the optimal bandwidth-storage tradeoff of the functional-repair regenerating codes and that of the exact-repair regenerating codes by characterizing the rate region of the exact-repair regenerating codes in the case $(n,k,d)=(4,3,3)$.

\subsection{Organization and Contributions}
In Section \ref{construction} we give a construction for codes between MSR and MBR points with exact repair. In Section \ref{inequalities} we derive some inequalities from our construction. Section \ref{example} provides an example showing  that, in the special case of $n=k+1=d+1$, our construction is close to optimal when comparing to the known capacity when only functional repair is required. In Section \ref{analysis} we show that when the distances of the parameters $n$, $k$, and $d$ are fixed but the actual values approach to infinity, the fraction of performance of our codes with exact repair and the known capacity of functional-repair codes approaches to one.

In Section \ref{easyconstructionsection} we give another construction with quite similar performance. The main differences of this construction when compared to the construction of Section \ref{construction} is its easiness as advantage and relaxation of assumption of symmetric repair as its disadvantage.

In Section \ref{similarconstructions} we give yet two other constructions that have some similarities with the construction of Section \ref{construction}. However, the performance of these constructions is relatively bad and the main interest of this section is the comparison of these constructions with the construction of Section \ref{construction}.

To make it easier to compare our constructions we use notions $P_{n,k,d}^{1}(\alpha,\gamma)$, $P_{n,k,d}^{2}(\alpha,\gamma)$, $P_{n,k,d}^{3}(\alpha,\gamma)$, and $P_{n,k,d}^{4}(\alpha,\gamma)$ to denote the performances of constructions of Section \ref{construction}, Section \ref{easyconstructionsection}, Subsection \ref{konstruktionodekopioinnilla}, and Subsection \ref{konstruktiotiedostoilla}, respectively. It is clear that
$$
P_{n,k,d}^{j}(\alpha,\gamma) \leq C_{n,k,d}^{\text{exact}} (\alpha,\gamma)
$$
for $j=1,2,3,4$.

\section{Main Construction}\label{construction}
Assume we have a storage system $DSS_1$ with exact repair for parameters $$(n,k,d)$$ with a node size $\alpha$ and the total repair bandwidth $\gamma=d\beta$. In this section we propose a construction that gives a new storage system for parameters $$(n'=n+1,k'=k+1,d'=d+1).$$ Let $DSS_1$ consist of nodes $v_1,\dots,v_n$, and let the stored file $F$ be of maximal size $C^{\text{exact}}_{n,k,d}(\alpha,\gamma)$.

Let then $DSS_{1+}$ denote a new system consisting of the original storage system $DSS_1$ and one extra node $v_{n+1}$ storing nothing. It is clear that $DSS_{1+}$ is a storage system for parameters $$(n+1,k+1,d+1)$$ and can store the original file $F$.

Let $\{ \sigma_j | j=1,\dots, (n+1)! \}$ be the set of permutations of the set $\{ 1, \dots, n+1 \}$. Assume that $DSS_{j}^{\text{new}}$ is a storage system for $j=1,\dots,(n+1)!$ corresponding to the permutation $\sigma_j$ such that $DSS_{j}^{\text{new}}$ is  exactly the same as $DSS_{1+}$ except that the order of the nodes is changed corresponding to the permutation $\sigma_j$, \emph{i.e.}, the $i$th node in $DSS_{1+}$ is the $\sigma_j(i)$th node in $DSS_{j}^{\text{new}}$.

Using these $(n+1)!$ new systems as building blocks we construct a new system $DSS_2$ such that its $j$th node for $j=1,\dots,n+1$ stores the $j$th node from each system $DSS_{i}^{\text{new}}$ for $i=1,\dots,(n+1)!$\,.

It is clear that this new system $DSS_2$ works for parameters $(n+1,k+1,d+1)$, has exact repair property, and stores a file of size $(n+1)! C^{\text{exact}}_{n,k,d}(\alpha,\gamma)$. By noticing that there are $n!$ such permutated copies $DSS_{j}^{\text{new}}$, where the $i$th node is empty, we get that the node size of the new system $DSS_2$ is
$$
\alpha_2=((n+1)!-n!)\alpha=n\cdot n!\alpha.
$$
Similarly, since an empty node does not need any repair we also find that the total repair bandwidth of the new system is
$$
\gamma_2=((n+1)!-n!)\gamma=n\cdot n!\gamma\,.
$$

\begin{definition}[Symmetric repair]
By \emph{symmetric repair} we mean that in the repair process of a lost node, each helper node transmits the same amount $\beta$ of information.
\end{definition}

Let us fix some repairing scheme for subsystems. Namely, define $\beta_{ijS} \in \{ 0,\beta \}$ to be the amount of information when the $i$th node repairs the $j$th node and the other helper nodes have indices from the set $S$. Now
$$
\sum_{i \in S} \beta_{ijS} = \left\{
  \begin{array}{l l}
    \gamma & \quad \text{if $j \neq n+1$}\\
    0 & \quad \text{if $j=n+1$}
  \end{array} \right.
$$
and hence
\begin{equation}
\begin{split}
\beta_2 & = \sum_{j=1}^{n+1} \sum_{\substack{ S \subseteq [n+1]\setminus\{j\} \\ |S|=d+1}} \sum_{i \in S} \beta_{ijS} \cdot (n-d-1)!d! \\
&  = \sum_{j=1}^{n} \sum_{\substack{ S \subseteq [n+1]\setminus\{j\} \\ |S|=d+1}} \gamma \cdot (n-d-1)!d! \\
&  = \sum_{j=1}^{n} \binom{n}{d+1} \gamma \cdot (n-d-1)!d! \\
&  = n \cdot n! \cdot \frac{d}{d+1} \beta.
\end{split}
\end{equation}
This proves that our construction has symmetric repair property.

The distributed storage system $DSS_1$ that we used as a starting point in our construction is not yet explicitly fixed. We have just fixed that the used storage system is some optimal system. To make it easier to follow our construction we use the notation $P_{n+1,k+1,d+1}^{\text{1, in progress}}\left(\alpha,\gamma\right)$ to denote the performance of our incomplete construction. The above reasoning implies the equality
$$
P_{n+1,k+1,d+1}^{\text{1, in progress}}\left(n\cdot n!\alpha,n\cdot n!\gamma\right) = (n+1)! C^{\text{exact}}_{n,k,d}(\alpha,\gamma).
$$
Dividing both sides by $n\cdot n!$ gives
\begin{equation}\label{firstbound}
P_{n+1,k+1,d+1}^{\text{1, in progress}}\left(\alpha,\gamma\right) = \frac{n+1}{n} C^{\text{exact}}_{n,k,d}(\alpha,\gamma).
\end{equation}

\begin{exam}\label{helppoesim}
If we relax on the requirement of a DSS to have symmetric repair then the construction becomes a bit simpler. Now, require instead only that the total repair bandwidth $\gamma$ is constant \emph{i.e.,} $\beta$ may take different values depending on the node. Let $(n,k,d)=(3,2,2)$ and $DSS_1$ be a distributed storage system with exact repair. Let $DSS_{j}^{\text{new}}$ be a storage system with $4$ nodes for $j=1,\dots,4$ where the $j$th node stores nothing, the $i$th node for $i<j$ stores as the $i$th node in the original system $DSS_1$, and the $i$th node for $i>j$ stores as the $(i-1)$th node in the original system $DSS_1$. That is, in the $j$th subsystem $DSS_{j}^{\text{new}}$ the $j$th node stores nothing while the other nodes are as those in the original system $DSS_1$.

Using these four new systems as building blocks we construct a new system $DSS_2$ for parameters $(4,3,3)$ such that its $j$th node for $j=1,\dots,4$ stores the $j$th node from each system $DSS_{i}^{\text{new}}$ for $i=1,\dots,4$. Hence each node in $DSS_2$ stores $(4-1)\alpha=3\alpha$ and the total repair bandwidth is $(4-1)\gamma=3\gamma$.

For example, if the original system $DSS_1$ consists of nodes $v_1$ storing $x$, $v_2$ storing $y$, and $v_3$ storing $x+y$ then $DSS_{1}^{\text{new}}$ consists of nodes $u_{11}$ storing nothing, $u_{12}$ storing $x_1$, $u_{13}$ storing $y_1$, and $u_{14}$ storing $x_1+y_1$. Similarly $DSS_{2}^{\text{new}}$ consists of nodes $u_{21}$ storing $x_2$, $u_{22}$ storing nothing, $u_{23}$ storing $y_2$, and $u_{24}$ storing $x_2+y_2$ and so on. Then in the resulting system the first node $w_1$ consists of nodes $u_{11}$ (storing nothing), $u_{21}$ (storing $x_2$), $u_{31}$ (storing $x_3$), and $u_{41}$ (storing $x_4$). The stored file is $(x_1,x_2,x_3,x_4,y_1,y_2,y_3,y_4)$.
\end{exam}

\begin{figure}[ht]
     \includegraphics[width=9cm]{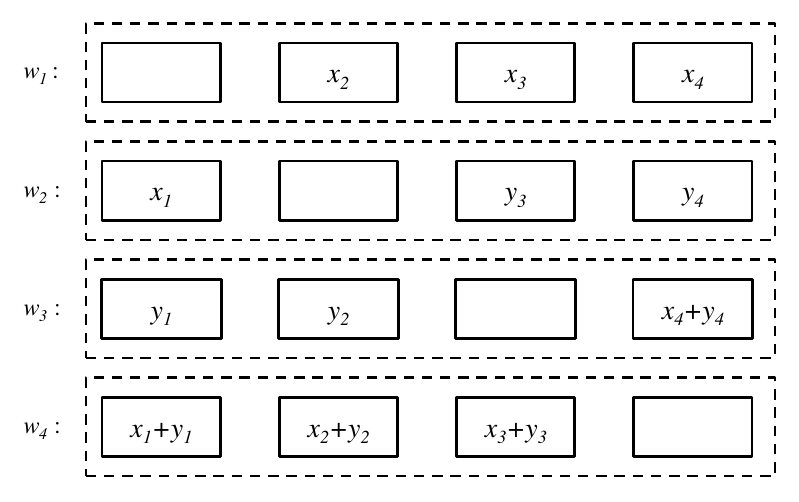}
     \caption{The figure illustrates the DSS built in Example \ref{helppoesim}. It consists of nodes $w_1$, $w_2$, $w_3$, and $w_4$.}

\end{figure}

\section{Bounds from the Construction}\label{inequalities}
Next we will derive some inequalities for the capacity in the case of exact repair. Using Equation \ref{firstbound} inductively we get
\begin{thm}\label{bound}
For an integer $j \in [ 0,k-1 ]$ we have
$$
C^{\text{exact}}_{n,k,d}\left(\alpha,\gamma\right) \geq \frac{n}{n-j} C^{\text{exact}}_{n-j,k-j,d-j}(\alpha,\gamma).
$$
\end{thm}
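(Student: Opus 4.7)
The plan is to derive the claim by iterating the single-step bound already encoded in Equation (\ref{firstbound}) and using the telescoping of the resulting product. The preceding construction has produced, for each valid triple $(n,k,d)$, an explicit exact-repair code whose performance $P^{\text{1, in progress}}_{n+1,k+1,d+1}(\alpha,\gamma)$ equals $\tfrac{n+1}{n}C^{\text{exact}}_{n,k,d}(\alpha,\gamma)$; since any achievable performance is by definition at most the exact capacity, this immediately upgrades to the one-step inequality
$$
C^{\text{exact}}_{n+1,k+1,d+1}(\alpha,\gamma) \geq \frac{n+1}{n}\, C^{\text{exact}}_{n,k,d}(\alpha,\gamma).
$$

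Next I would reindex the one-step bound in the form I want to iterate: for any valid $(N,K,D)$ with $K\leq D < N$ and $K\geq 2$,
$$
C^{\text{exact}}_{N,K,D}(\alpha,\gamma) \geq \frac{N}{N-1}\, C^{\text{exact}}_{N-1,K-1,D-1}(\alpha,\gamma).
$$
The theorem is then proved by induction on $j$. The base case $j=0$ is the trivial equality $C^{\text{exact}}_{n,k,d} = \tfrac{n}{n}C^{\text{exact}}_{n,k,d}$. For the inductive step, assume the bound holds for some $j \in [0,k-2]$; applying the one-step inequality to the triple $(n-j,k-j,d-j)$ produces a factor of $\tfrac{n-j}{n-j-1}$, and combining with the induction hypothesis yields
$$
C^{\text{exact}}_{n,k,d}(\alpha,\gamma) \geq \frac{n}{n-j}\cdot\frac{n-j}{n-j-1}\, C^{\text{exact}}_{n-j-1,k-j-1,d-j-1}(\alpha,\gamma) = \frac{n}{n-j-1}\, C^{\text{exact}}_{n-j-1,k-j-1,d-j-1}(\alpha,\gamma),
$$
which is exactly the claim for $j+1$. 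Equivalently, one can observe directly that applying the one-step bound $j$ times yields the telescoping product
$\prod_{i=0}^{j-1}\tfrac{n-i}{n-i-1} = \tfrac{n}{n-j}$.

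I do not expect a genuine obstacle here, since all of the work has already been done in Section \ref{construction}. The only thing to watch is that the iteration stays inside the regime where the construction is valid, namely that $k-j\geq 1$ (so that the smaller system still has a meaningful reconstruction parameter) and that $d-j\geq k-j$, which is preserved since subtracting the same integer from both $k$ and $d$ does not change their difference; the condition $j\in[0,k-1]$ in the statement is precisely what ensures this.
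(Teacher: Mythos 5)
Your proposal is correct and takes exactly the approach the paper intends: the paper states Theorem \ref{bound} immediately after Equation (\ref{firstbound}) with the one-line justification ``Using Equation \ref{firstbound} inductively we get,'' which is precisely your telescoping induction. You have simply filled in the details (the achievability-implies-capacity step, the reindexing, the base case, and the validity of the parameter range), and they are all sound.
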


It is proved in \cite{kumar}, \cite{MSRequal}, \cite{optimalMDS}, and \cite{yhteensaMDS} that the MSR point can be achieved if exact repair is assumed. As a consequence of this and Theorem \ref{firstbound} we get the following bound.
\begin{thm}\label{boundMSR}
For integers $1 \leq i \leq k$ we have
$$
C^{\text{exact}}_{n,k,d}\left(\alpha,\frac{(d-k+i)\alpha}{d-k+1}\right) \geq \frac{ni\alpha}{n-k+i}\,.
$$
\end{thm}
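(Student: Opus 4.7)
The plan is to apply Theorem \ref{bound} to a carefully chosen sub-family of parameters and then invoke the exact-repair achievability of the MSR point as a base case. The key observation is that at the MSR point \eqref{MSR} for parameters $(n',k',d')$, if we fix the node size to be $\alpha$ we must have $B = k'\alpha$ and
$$
\gamma \;=\; \frac{d' B}{k'(d'-k'+1)} \;=\; \frac{d' \alpha}{d'-k'+1}.
$$
Since exact-repair codes attaining the MSR point exist by \cite{kumar,MSRequal,optimalMDS,yhteensaMDS}, we obtain
$$
C^{\text{exact}}_{n',k',d'}\!\left(\alpha,\frac{d'\alpha}{d'-k'+1}\right) \;\geq\; k'\alpha.
$$

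Next I would specialise $(n',k',d') = (n-k+i,\; i,\; d-k+i)$. The crucial arithmetic check is that $d'-k'+1 = d-k+1$, so the repair bandwidth becomes $\frac{(d-k+i)\alpha}{d-k+1}$ and the file size becomes $i\alpha$. Thus
$$
C^{\text{exact}}_{n-k+i,\,i,\,d-k+i}\!\left(\alpha,\frac{(d-k+i)\alpha}{d-k+1}\right) \;\geq\; i\alpha.
$$

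Finally I would apply Theorem \ref{bound} with $j = k-i$, which satisfies $0 \leq j \leq k-1$ whenever $1 \leq i \leq k$. This yields
$$
C^{\text{exact}}_{n,k,d}\!\left(\alpha,\frac{(d-k+i)\alpha}{d-k+1}\right) \;\geq\; \frac{n}{n-k+i}\; C^{\text{exact}}_{n-k+i,\,i,\,d-k+i}\!\left(\alpha,\frac{(d-k+i)\alpha}{d-k+1}\right) \;\geq\; \frac{n\,i\,\alpha}{n-k+i},
$$
which is exactly the claimed bound. There is no genuine obstacle in the argument; the only thing to watch is the bookkeeping when identifying $d'-k'+1 = d-k+1$ so that the bandwidth expression coming out of the MSR formula matches the one appearing in the statement.
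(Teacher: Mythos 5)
Your proof is correct and follows essentially the same route as the paper's: instantiate the exact-repair MSR achievability at parameters $(n',k',d')=(n-k+i,\,i,\,d-k+i)$, observe that $d'-k'+1=d-k+1$, and then lift via Theorem \ref{bound} with $j=k-i$. The paper merely writes the same argument in the reverse order (starting from $j$ and substituting $i=k-j$ at the end), so the two are identical in substance.
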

\begin{proof}
Write $n'=n-j,k'=k-j,d'=d-j$, $\alpha=\frac{B}{k'}$, and $\gamma=\frac{d' B}{k' (d'-k'+1)}$. It is proved for the MSR point in \cite{kumar}, \cite{MSRequal}, \cite{optimalMDS}, and \cite{yhteensaMDS} that
$$
C^{\text{exact}}_{n',k',d'}(\alpha,\gamma) = B,
$$
\emph{i.e.},
$$
C^{\text{exact}}_{n-j,k-j,d-j}\left(\alpha,\frac{(d-j) \alpha}{d-k+1}\right) = (k-j)\alpha.
$$

Hence by Theorem \ref{bound} we have
$$
C^{\text{exact}}_{n,k,d}\left(\alpha,\frac{(d-j) \alpha}{d-k+1}\right) \geq \frac{n(k-j)\alpha}{n-j}\,.
$$

Now a change of  variables by setting $i=k-j$ gives us  the  result.
\end{proof}

Our construction is now ready since we have decided to use MSR optimal codes as a starting point for our construction. So let us use the notion
\begin{equation}\label{performance1}
P_{n,k,d}^{1}\left(\alpha,\frac{(d-k+i)\alpha}{d-k+1}\right) = \frac{ni\alpha}{n-k+i}
\end{equation}
for integers $i=1,\dots,k$, to note the performance of our construction.

\begin{exam}
Tian characterized the rate region of the exact-repair regenerating codes in the case $(n,k,d)=(4,3,3)$ in \cite{region433}. In this example we will compare our construction to this.

In \cite{region433} the stored file is assumed to be of size $1$ and then the rate-region of exact-regenerating codes is characterized by following pairs of $(\alpha,\beta)$: $(\frac{1}{3},\frac{1}{3})$, $(\frac{3}{8},\frac{1}{4})$, and $(\frac{1}{2},\frac{1}{6})$. These correspond to following pairs of $(\alpha,\gamma)$: $(\frac{1}{3},1)$, $(\frac{3}{8},\frac{3}{4})$, and $(\frac{1}{2},\frac{1}{2})$, \emph{i.e.}, $C^{\text{exact}}_{4,3,3}\left(\frac{1}{3},1\right) = 1$, $C^{\text{exact}}_{4,3,3}\left(\frac{3}{8},\frac{3}{4}\right) = 1$, and $C^{\text{exact}}_{4,3,3}\left(\frac{1}{2},\frac{1}{2}\right) = 1$.

Theorem \ref{boundMSR} gives in this same special case
$$
P_{4,3,3}^{1}(\alpha,i\alpha) = \frac{4i\alpha}{1+i}\,
$$
for integers $i=1,2,3$.

Hence $P_{4,3,3}^{1}\left(\alpha,\alpha\right) = 2\alpha$, $P_{4,3,3}^{1}\left(\alpha,2\alpha\right) = \frac{8\alpha}{3}$, and $P_{4,3,3}^{1}\left(\alpha,3\alpha\right) = 3\alpha$. By substituting into these $\alpha=\frac{1}{2},\frac{3}{8},\frac{1}{3}$, respectively, we get exactly the same performances as in \cite{region433}.
\end{exam}

\section{Example: Case $n=k+1=d+1$}\label{example}
In this section we study the special case $n=k+1=d+1$ and compare the resulting capacity with exact repair to the known capacity with the assumption of functional repair,
$$
C_{n-1,n-1}(\alpha,\gamma) = \sum_{j=0}^{n-2} \min \left\{ \alpha , \frac{n-1-j}{n-1} \gamma \right\}.
$$

Our construction gives codes with performance
$$
P_{n,n-1,n-1}^{1}(\alpha,i\alpha) = \frac{ni\alpha}{1+i}
$$
for integers $i=1,\dots,k$.

Notice that now in the extreme points our performance $P_{n,n-1,n-1}^{1}$ achieves the known capacity, \emph{i.e.},
$$
C^{\text{exact}}_{n,n-1,n-1}(\alpha,\alpha) = P_{n,n-1,n-1}^{1}(\alpha,\alpha) = \frac{n\alpha}{2}
$$
for the MBR point and
$$
C^{\text{exact}}_{n,n-1,n-1}(\alpha,k\alpha) = P_{n,n-1,n-1}^{1}(\alpha,k\alpha) = (n-1)\alpha
$$
for the MSR point.

As an example we study the fraction
$$
\frac{P_{n,n-1,n-1}^{1}(\alpha,i\alpha)}{C_{n-1,n-1}(\alpha,i\alpha)} = \frac{\frac{ni\alpha}{1+i}}{\sum_{j=0}^{n-2} \min \left\{ \alpha , \frac{n-1-j}{n-1} i\alpha \right\}}
$$
for integers $i \in [1,k]$. Writing it out we see that
\begin{equation}
\begin{split}
&\frac{P_{n,n-1,n-1}^{1}(\alpha,i\alpha)}{C_{n-1,n-1}(\alpha,i\alpha)} \\
= & \frac{\frac{ni}{1+i}}{\sum_{j=0}^{T } 1 + \sum_{j=T + 1}^{n-2}  \frac{n-1-j}{n-1} i } \\
= & \frac{\frac{ni}{1+i}}{ T +1 + \frac{i}{2(n-1)} \cdot (n-T-1)(n-T-2)}\,, \\
\end{split}
\end{equation}
where $T=\lfloor (n-1)(1-\frac{1}{i}) \rfloor$.

For large values of $n$ this is approximately
$$
\frac{2i^2}{2i^2+i-1} \geq \frac{8}{9}
$$
for all $i=1,\dots,k$.

Notice that if we had chosen $n=k+2=d+2$ instead of $n=k+1=d+1$, then we would have ended up with
$$
\frac{2i^2}{2i^2+3i-2}.
$$
Similarly, if we had chosen $n=k+3=d+3$ then we would have ended up with
$$
\frac{2i^2}{2i^2+5i-3}.
$$
These both are also close to 1 when $i$ is not too small. For this reason we will study the asymptotic behavior of the capacity curve more carefully in the next section.

\begin{table*}
  [ht]
  \caption{The performance of construction of Section \ref{construction}}
  \begin{tabular}{c c c}

      \includegraphics[width=5.0cm]{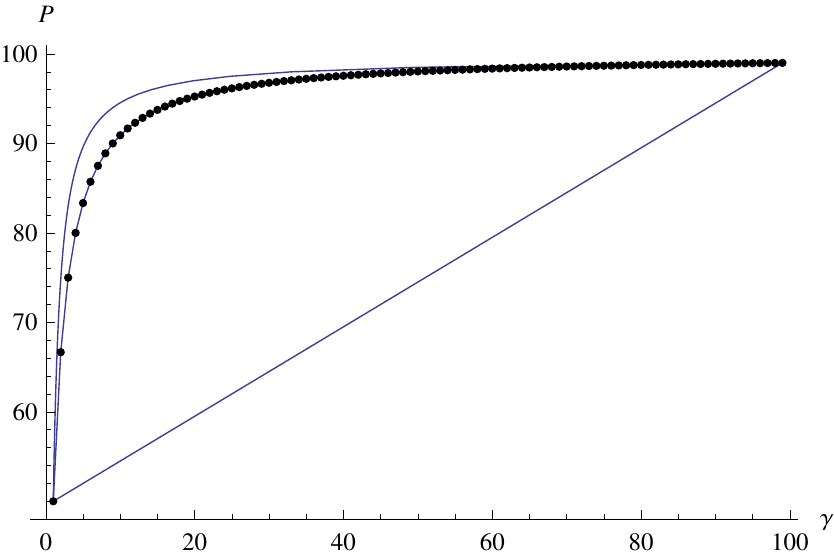} & \includegraphics[width=5.0cm]{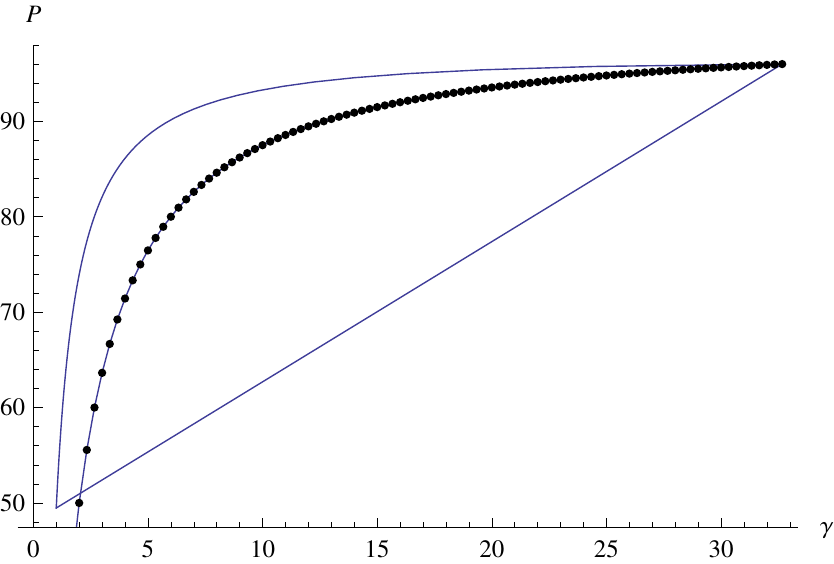} & \includegraphics[width=5.0cm]{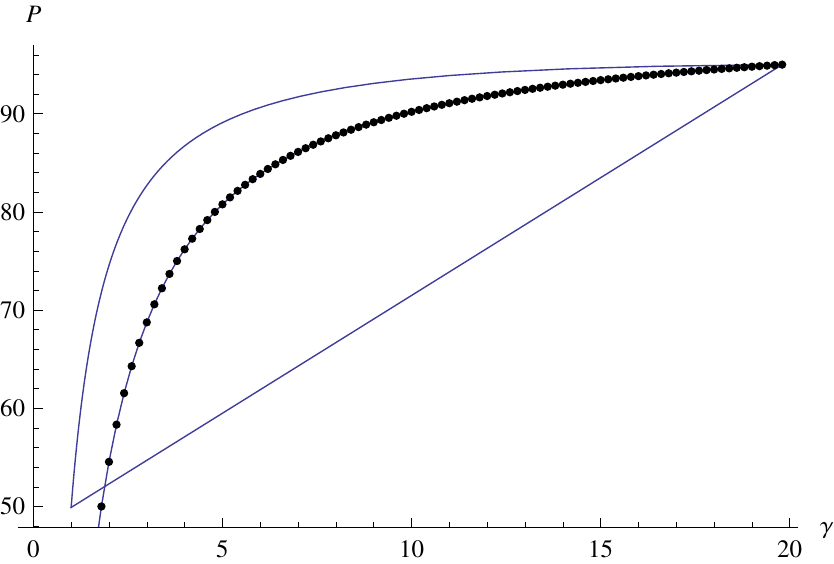} \\
      $(n,k,d)=(100,99,99)$ & $(n,k,d)=(100,96,98)$ & $(n,k,d)=(100,95,99)$ \\

      \includegraphics[width=5.0cm]{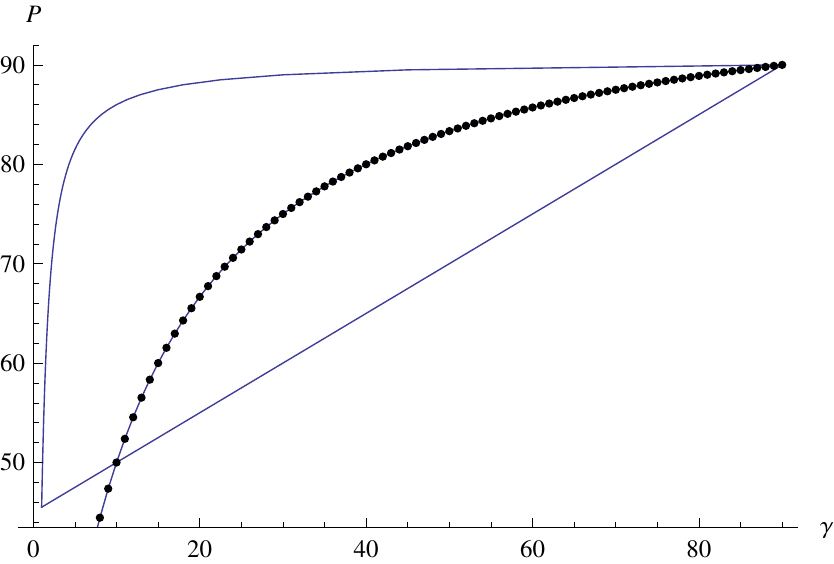} & \includegraphics[width=5.0cm]{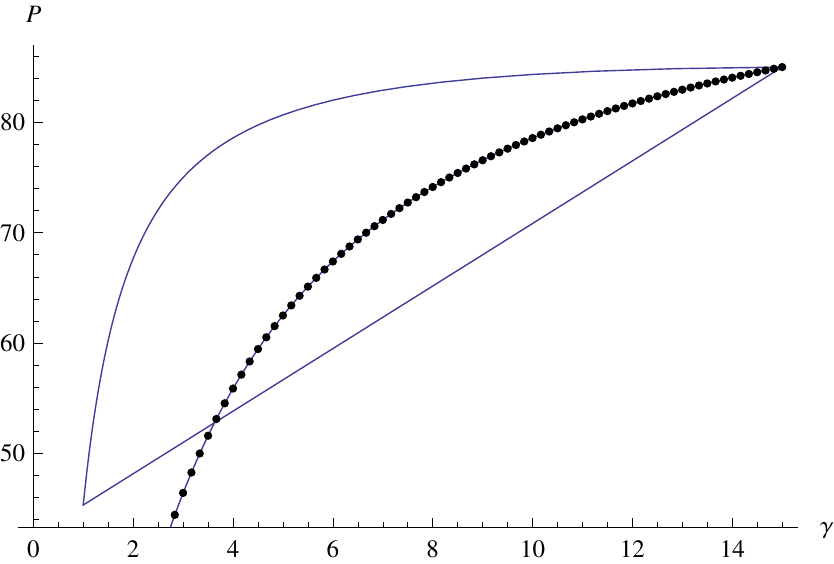} & \includegraphics[width=5.0cm]{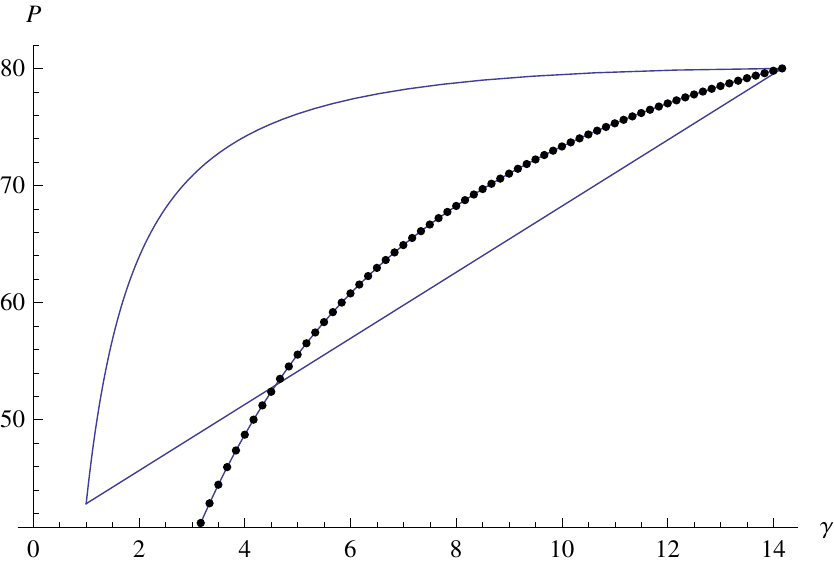} \\
      $(n,k,d)=(100,90,90)$ & $(n,k,d)=(100,85,90)$ & $(n,k,d)=(100,80,85)$ \\

  \end{tabular}
  \\[10pt]
   The figures show the performance $P_{n,k,d}^{1}$ of codes from construction of Section \ref{construction} (dotted curve) between the capacity of functionally repairing codes (uppermost curve) and the trivial lower bound given by interpolation of the known MSR and MBR points with different $(n,k,d)$. Here $\alpha=1$, and $\gamma \in [1,\frac{d}{d-k+1}]$.
   \label{tab:normalperformances}
\end{table*}

\section{The Case when $n$, $k$ and $d$ Are Close to Each Other}\label{analysis}
Next  we will study the special case where $n$, $k$ and $d$ are close to each other. We will do this by setting $n_M=n+M$, $k_M=k+M$ and $d_M=d+M$ and letting $M \rightarrow \infty$, and then examine how the capacity curve asymptotically behaves. The example in the previous section showed us that in that special case the performance $P_{n,k,d}^{1}(\alpha,\gamma)$ is quite close to the capacity of functionally regenerating codes. However, in the previous section we fixed $i$ to be an integer and then assumed that $n$ is large. In this section we tie up the values $i$ and $M$ together to arrive at a situation where the total repair bandwidth stays on a fixed point between its minimal possible value given by the MBR point and its maximal possible value given by the MSR point.

For each $M$ our construction gives a code with performance
$$
P_{n_M,k_M,d_M}^{1}\left(\alpha,\frac{(d_M-k_M+i)\alpha}{d_M-k_M+1}\right) = \frac{n_M i \alpha}{n-k+i}
$$
for $i=1,\dots,k_M$, hence $P_{n_M,k_M,d_M}^{1}\left(\alpha,\frac{(d_M-k_M+x)\alpha}{d_M-k_M+1}\right)$ with  $x \in [1,k]$ is the piecewise linear curve connecting these points.

Let $s \in (0,1]$ be a fixed number and $i=1+s(k_M-1)$. We will study how the fraction
$$
\frac{P_{n_M,k_M,d_M}^{1}\left(\alpha,\frac{(d_M-k_M+i)\alpha}{d_M-k_M+1}\right)}{C_{k_M,d_M}(\alpha,\frac{(d_M-k_M+i)\alpha}{d_M-k_M+1})}
$$
behaves as we let $M \rightarrow \infty$. Informally this tells how close our performance curve and the known capacity curve are to each other when $M$ is large, \emph{i.e.}, values $n_M,k_M,d_M$ are close to each other.

\begin{remark}
In the MSR point we have $$\gamma_{MSR}=\frac{d_M \alpha}{d_M-k_M+1}$$ and in the MBR point $$\gamma_{MBR}=\alpha.$$ Hence
$$
\alpha \cdot \frac{d_M-k_M+i}{d_M-k_M+1} = s\gamma_{MSR}+(1-s)\gamma_{MBR}.
$$
\end{remark}

\begin{thm}\label{asymptotic}
Let $s \in (0,1]$ be a fixed number and $i=1+s(k_M-1)$. Then
$$
\lim_{M \rightarrow \infty} \frac{P_{n_M,k_M,d_M}^{1}\left(\alpha,\frac{(d_M-k_M+i)\alpha}{d_M-k_M+1}\right)}{C_{k_M,d_M}(\alpha,\frac{(d_M-k_M+i)\alpha}{d_M-k_M+1})} = 1.
$$
\end{thm}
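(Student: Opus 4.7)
My plan is a direct asymptotic comparison of the two explicit formulas. Introducing the constants $\delta = d-k$ and $\eta = n-k$ (both independent of $M$), Equation \ref{performance1} rewrites as
$$
P_{n_M,k_M,d_M}^{1}\!\left(\alpha,\gamma_M\right) = \frac{n_M\,i\,\alpha}{\eta + i}, \quad \gamma_M = \frac{(\delta+i)\alpha}{\delta+1}.
$$
Since $i = 1 + s(k_M-1)$ grows linearly in $M$, the ratio $i/(\eta + i)$ converges to $1$, giving $P = n_M\alpha\,(1 + O(1/M))$ immediately.

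For the denominator I would analyze the Dimakis et al.\ sum by identifying where the inner min switches. The inequality $\alpha \leq \tfrac{d_M-j}{d_M}\gamma_M$ is equivalent to $j \leq d_M(i-1)/(\delta+i)$, so with $T_M := \lfloor d_M(i-1)/(\delta+i)\rfloor$ the capacity splits as
$$
C_{k_M,d_M}(\alpha,\gamma_M) = (T_M+1)\alpha + \frac{\gamma_M}{d_M}\sum_{j=T_M+1}^{k_M-1}(d_M-j).
$$
A short computation shows that, under $i \sim s k_M$, the unfloored value $d_M(\delta+1)/(\delta+i)$ tends to the bounded constant $(\delta+1)/s$, so $T_M = d_M - O(1)$, while the tail sum has only $O(1)$ terms each of size $O(\alpha)$. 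Consequently $C_{k_M,d_M}(\alpha,\gamma_M) = d_M\alpha + O(\alpha)$.

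Putting the two estimates together, $P/C = \bigl(n_M + O(1)\bigr)/\bigl(d_M + O(1)\bigr) \to 1$, since $n-d$ is fixed while $n_M, d_M \to \infty$. The one slightly delicate observation is that $d_M - T_M$ remains bounded as $M \to \infty$ even though both summands diverge; this is what forces the bulk of $C$ into the first term $(T_M+1)\alpha$, which then matches the leading behaviour of $P$ up to a factor that tends to $1$. The remaining work is routine bookkeeping of the $O(\alpha)$ lower-order corrections against the leading scale $d_M\alpha$, together with rigorous control of the floor in $T_M$, which contributes at most an additive $O(\alpha)$.
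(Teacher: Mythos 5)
Your proposal is correct and follows essentially the same route as the paper: both compute $P^1_{n_M,k_M,d_M}$ and $C_{k_M,d_M}$ explicitly, locate the switching threshold (your $T_M$ coincides with the paper's $t=\frac{d_M s(k_M-1)}{d-k+1+s(k_M-1)}$), and rest on the observation that the tail of the capacity sum contains only boundedly many terms, each of size $O(\alpha)$. Your big-$O$ bookkeeping, which distils the two sides to $P = n_M\alpha + O(\alpha)$ and $C = d_M\alpha + O(\alpha)$, is a slightly cleaner packaging of the paper's computation of the leading orders of $h_1,\dots,h_4$, but the underlying argument is the same.
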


The proof is rather technical and is hence postponed to Appendix.

As a straightforward corollary to Theorem \ref{asymptotic} we have

\begin{thm}\label{asymptoticcorollary}
Let $s \in [0,1]$ be a fixed number and let $\gamma_{MSR}=\frac{d_M \alpha}{d_M-k_M+1}$ and $\gamma_{MBR}=\alpha$. Then
$$
\lim_{M \rightarrow \infty} \frac{C^{\text{exact}}_{n_M,k_M,d_M}(\alpha,s\gamma_{MSR}+(1-s)\gamma_{MBR})}{C_{k_M,d_M}(\alpha,s\gamma_{MSR}+(1-s)\gamma_{MBR})} = 1.
$$
\end{thm}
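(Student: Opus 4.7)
My plan is to use a squeeze argument on the chain of inequalities $P^{1} \leq C^{\text{exact}} \leq C$, combined with Theorem~\ref{asymptotic}, and then handle the single missing boundary point $s=0$ separately using known MBR-achievability.

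First I would note that for every choice of parameters $(n,k,d)$ and every admissible $(\alpha,\gamma)$, the performance of the particular construction of Section~\ref{construction} gives an achievable point for exact repair and exact repair is more restrictive than functional repair, so
\begin{equation*}
P_{n_M,k_M,d_M}^{1}(\alpha,\gamma_M)
\;\leq\;
C^{\text{exact}}_{n_M,k_M,d_M}(\alpha,\gamma_M)
\;\leq\;
C_{k_M,d_M}(\alpha,\gamma_M),
\end{equation*}
where I write $\gamma_M = s\gamma_{MSR}+(1-s)\gamma_{MBR}$. Dividing through by $C_{k_M,d_M}(\alpha,\gamma_M)$, the central ratio in the desired limit is sandwiched between $P^{1}/C$ and $1$.

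Next I would split into two cases according to $s$. For $s\in(0,1]$, the parameter $i=1+s(k_M-1)$ satisfies the hypothesis of Theorem~\ref{asymptotic}, so the lower ratio $P^{1}/C$ tends to $1$ as $M\to\infty$; the squeeze theorem then forces $C^{\text{exact}}/C \to 1$. For the remaining value $s=0$, we have $\gamma_M=\gamma_{MBR}=\alpha$, which is precisely the MBR point. Since it was established in \cite{kumar} that exact-repair codes achieve the MBR point, the identity
\begin{equation*}
C^{\text{exact}}_{n_M,k_M,d_M}(\alpha,\gamma_{MBR}) \;=\; C_{k_M,d_M}(\alpha,\gamma_{MBR})
\end{equation*}
holds for every $M$, so the ratio equals $1$ identically and the limit is trivially $1$.

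Combining the two cases gives the corollary. The plan involves no genuine obstacle: the analytic content is already carried by Theorem~\ref{asymptotic}, and the only thing to check is that the extension from $s\in(0,1]$ to $s\in[0,1]$ is consistent, which it is because the additional endpoint corresponds to a point where exact and functional capacities coincide. No new calculation beyond invoking the sandwich inequality and the MBR-achievability result is required.
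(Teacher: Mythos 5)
Your proof is correct and matches the approach the paper intends when it calls Theorem~\ref{asymptoticcorollary} ``a straightforward corollary to Theorem~\ref{asymptotic}'': the sandwich $P^{1}_{n_M,k_M,d_M}(\alpha,\gamma_M) \leq C^{\text{exact}}_{n_M,k_M,d_M}(\alpha,\gamma_M) \leq C_{k_M,d_M}(\alpha,\gamma_M)$ together with Theorem~\ref{asymptotic} gives the result for $s\in(0,1]$. Your separate treatment of the endpoint $s=0$ via MBR-achievability of \cite{kumar} is a necessary detail that the paper does not spell out --- Theorem~\ref{asymptotic} is stated only for $s\in(0,1]$, and its proof's leading-order computation degenerates at $s=0$ --- and your observation that $C^{\text{exact}}=C$ identically at the MBR point cleanly closes the gap.
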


\section{A Simpler Construction}\label{easyconstructionsection}
In this section we will give a construction of a distributed storage system that again uses optimal codes at the MSR point as building blocks. There are two important differences to the main construction in Section \ref{construction} of this paper. The first difference is the easiness of the construction of this section. The second is that this construction has no symmetric repair. We only require that the total repair bandwidth is fixed to be $\gamma$ but it can consist of varying $\beta$s.

\subsection{Construction}\label{easyconstruction}
We are interested in to design a storage system for given parameters $(n,k,d)$ and $(\alpha,\gamma)$. Write
$$
\epsilon = n-k
$$
and
$$
\delta = n-d.
$$
Choose $l \in \Z_{+}$ integers $n_1, \dots, n_l$ such that
$$
n_j \geq \epsilon+1
$$
for all $j=1,\dots,l$ and $n=n_1+ \dots +n_l$. For this choice, write
$$
k_j=n_j-\epsilon
$$
and
$$
d_j=n_j-\delta
$$
for all $j=1,\dots,l$.

Assume we have $l$ storage systems $DSS_1, \dots, DSS_l$ corresponding parameters $(n_1,k_1,d_1), \dots, (n_l,k_l,d_l)$, respectively. Each of these systems has node size $\alpha$ and total repair bandwidth $\gamma$. Suppose we put these systems together to get a new bigger system $DSS_{\text{big}}$ with $n_1+ \dots +n_l=n$ nodes and storing the same $l$ files that original systems $DSS_1, \dots, DSS_l$ store.

This is indeed a distributed storage system for parameters $(n,k,d)$ and $(\alpha,\gamma)$: It is clear that we have $n$ nodes, each of size $\alpha$. Each set of $k$ nodes can recover the file: Indeed, there are $\epsilon=n-k$ nodes that are not part of the reconstruction process. Hence of each subsystem $DSS_j$ we have at least $n_j-\epsilon=k_j$ nodes that are part of the reconstruction process and hereby we can recover the corresponding file and hence the whole file.

By the same argumentation as above we notice that contacting any $d$ of the nodes we can repair a lost node. Hence we only have to download the same amount of information in the repair process of this new bigger system as in the repair process of the corresponding subsystem the total repair bandwidth is indeed $\gamma$.
\begin{remark}
The main disadvantage of constructed storage systems of this section is that they do not have symmetric repair. By shuffling the nodes corresponding to each  permutation on set $\{1,\dots,n\}$ as in the construction of Section \ref{construction} would give a DSS with symmetric repair and same performance. However, this would destroy the main advantage of this construction, namely its easiness.
\end{remark}

\begin{figure}[ht]
     \centering
     \includegraphics[width=8cm]{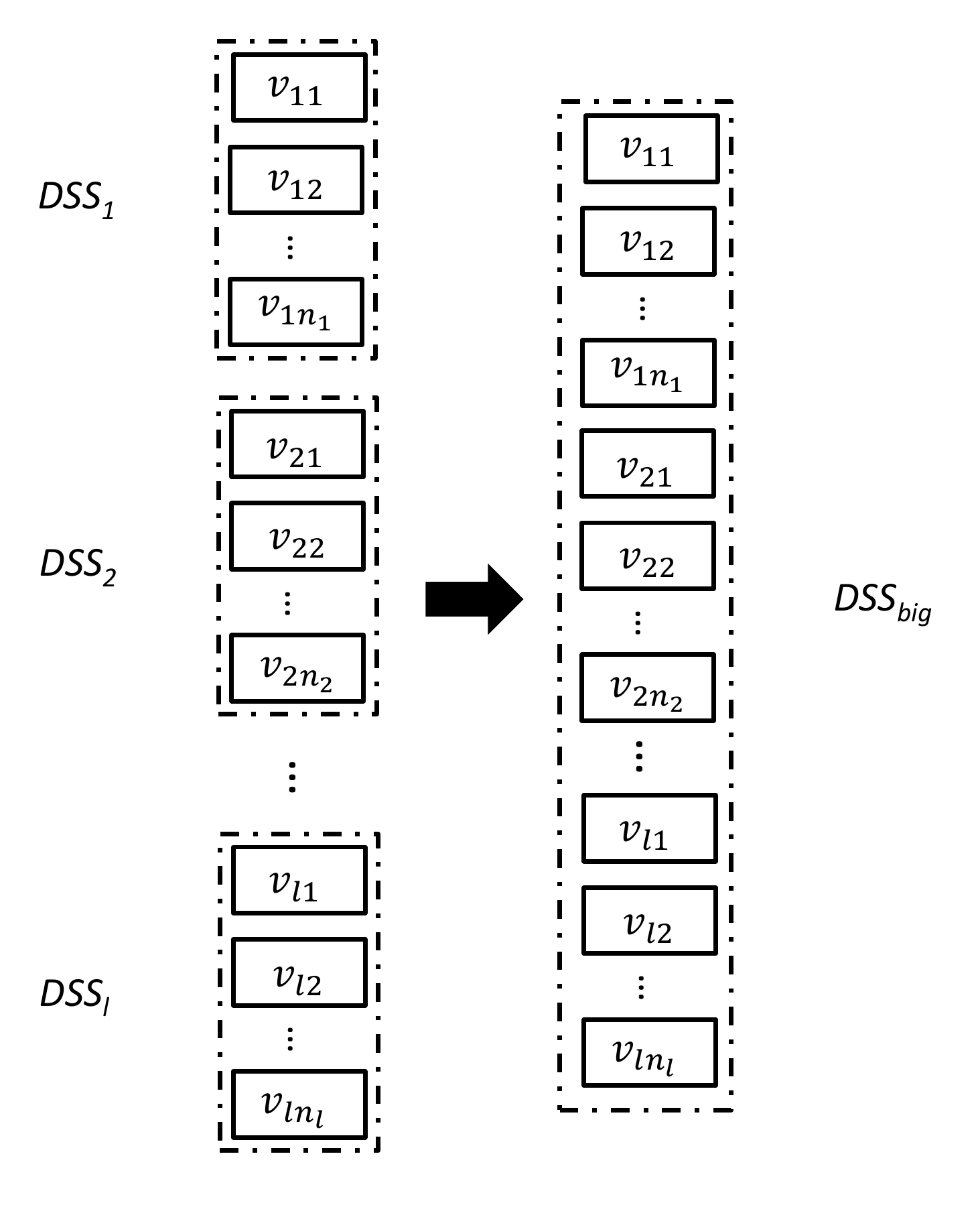}
     \caption{The figure illustrates the construction of Section \ref{easyconstructionsection}. First we have $l$ storage systems $DSS_1,\dots,DSS_l$ and then we just put them together to get a new storage system $DSS_{\text{big}}$.}

\end{figure}

\subsection{The Performance of the Construction}
In the construction we did not stick to any fixed type of subsystem. Hence we have the following general inequality.
\begin{proposition}
Given positive integers $n,k,d$ with $k \leq d < n$ and the decomposition of $n$ to positive integers $n_1,\dots,n_l$ with $n=n_1+ \dots +n_l$ and $n_j \geq n-k+1$ for all $j=1,\dots,l$. Define also integers
$$
k_j = n_j - n + k
$$
and
$$
d_j = n_j - n + d
$$
for $j=1,\dots,l$.

Then we have
\begin{equation}\label{generallowerbound}
C^{\text{exact}}_{n,k,d}(\alpha,\gamma) \geq \sum_{j=1}^{l} C^{\text{exact}}_{n_j,k_j,d_j}(\alpha,\gamma).
\end{equation}
\end{proposition}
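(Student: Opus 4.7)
The proof I would give is essentially a formalization of the informal construction described in Section \ref{easyconstructionsection}; the proposition is the statement that the capacity achievable by that building-block construction is a valid lower bound. So the plan is to choose, for each $j \in \{1,\dots,l\}$, a distributed storage system $DSS_j$ for parameters $(n_j,k_j,d_j)$ with node size $\alpha$, total repair bandwidth $\gamma$, exact repair, and file size arbitrarily close to $C^{\text{exact}}_{n_j,k_j,d_j}(\alpha,\gamma)$. I would then form the combined system $DSS_{\text{big}}$ by placing the node sets side by side on disjoint index sets of sizes $n_1,\dots,n_l$, so that $DSS_{\text{big}}$ has $n$ nodes, each of size $\alpha$, and stores an independent file in each block.

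The two things to verify are exact data collection from any $k$ nodes and exact repair from any $d$ nodes. For data collection, fix an arbitrary set $T$ of $k$ surviving nodes and note that the number of nodes \emph{not} in $T$ is $n-k=\epsilon$. Therefore at most $\epsilon$ nodes of any single block $DSS_j$ can be missing from $T$, so $T$ contains at least $n_j-\epsilon=k_j$ nodes of $DSS_j$; by the data collection property of $DSS_j$, the $j$-th file can be reconstructed, and doing this for every $j$ recovers the whole file. For repair, suppose a node in block $j_0$ is lost and a set $H$ of $d$ helpers is chosen. The number of nodes outside $H$ is $n-d=\delta$, and the lost node itself is one of them, so at most $\delta$ surviving nodes of block $j_0$ can be missing from $H$; hence $H$ contains at least $n_{j_0}-\delta=d_{j_0}$ helpers from $DSS_{j_0}$. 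I would run the repair process of $DSS_{j_0}$ using any $d_{j_0}$ of these helpers (the others transmit $0$), which yields exact repair of the lost node and uses total bandwidth exactly $\gamma$.

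The file stored by $DSS_{\text{big}}$ has size $\sum_{j=1}^{l} C^{\text{exact}}_{n_j,k_j,d_j}(\alpha,\gamma)$ (up to the slack chosen when instantiating each $DSS_j$), so
\[
C^{\text{exact}}_{n,k,d}(\alpha,\gamma) \;\geq\; \sum_{j=1}^{l} C^{\text{exact}}_{n_j,k_j,d_j}(\alpha,\gamma),
\]
giving Inequality \eqref{generallowerbound}. The only real content is the two pigeonhole counts $n_j-\epsilon \ge k_j$ and $n_j-\delta \ge d_j$, which follow immediately from the definitions; the hypothesis $n_j \ge \epsilon+1$ is what ensures these parameter triples $(n_j,k_j,d_j)$ are themselves admissible (in particular $k_j \ge 1$). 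I do not anticipate a genuine obstacle: the subtle point to be careful about is that the repair of a node in block $j_0$ must not require contacting helpers from other blocks, which is exactly what the bound $n_{j_0}-\delta \ge d_{j_0}$ guarantees, and that the helpers outside block $j_0$ simply contribute zero to the bandwidth so that the total remains $\gamma$ rather than blowing up.
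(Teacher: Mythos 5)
Your proof is correct and is essentially the same as the paper's: the paper's proof of this proposition simply points back to the block-concatenation construction of Subsection~\ref{easyconstruction}, and the two pigeonhole counts you spell out ($T$ misses at most $\epsilon$ nodes, so it hits at least $k_j$ in each block; $H$ misses at most $\delta$ nodes, one of which is the failed node, so it hits at least $d_{j_0}$ helpers in the failed block) are exactly the argument the paper gives for data collection and gestures at for repair. One small wording slip: you write ``at most $\delta$ \emph{surviving} nodes of block $j_0$ can be missing from $H$''; since the failed node is itself one of the $\delta$ nodes outside $H$ and is not surviving, the bound on missing survivors in block $j_0$ is $\delta-1$, which combined with the $n_{j_0}-1$ survivors in that block gives the stated $d_{j_0}$ helpers --- your conclusion is right, only the adjective is misplaced.
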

\begin{proof}
The setup is just as in the construction of subsection \ref{easyconstruction}.
\end{proof}

To make it easier to  follow, let us use the notation $P^{\text{2, in progress}}_{n,k,d}(\alpha,\gamma)$ for the performance of this incomplete construction. By above, we have
\begin{equation}\label{eq:const2incomplete}
P^{\text{2, in progress}}_{n,k,d}(\alpha,\gamma) = \sum_{j=1}^{l} C^{\text{exact}}_{n_j,k_j,d_j}(\alpha,\gamma).
\end{equation}

Next we will fix the subsystems $DSS_1, \dots, DSS_l$ and then derive another lower bound for the performance of our construction of exact-regenerating codes. Let
$$
n_j = \left\lfloor \frac{n}{l} \right\rfloor
$$
for $j=1,\dots,l-1$ and
$$
n_l = n - (l-1) \left\lfloor \frac{n}{l} \right\rfloor.
$$
Then
$$
k_1=n_1 - n + k,
$$
$$
d_1=n_1 - n + d,
$$
$$
k_l=n_l - n + k = k - (l-1) \left\lfloor \frac{n}{l} \right\rfloor
$$
and
$$
d_l=n_l - n + d = d - (l-1) \left\lfloor \frac{n}{l} \right\rfloor.
$$
By substituting these into the equality \ref{eq:const2incomplete} we get
$$
P^{\text{2, in progress}}_{n,k,d}(\alpha,\gamma) = (l-1) C^{\text{exact}}_{n_1,k_1,d_1}(\alpha,\gamma) + C^{\text{exact}}_{n_l,k_l,d_l}(\alpha,\gamma).
$$

To finish our construction we again use MSR optimal codes as building blocks and substitute in the above
$$
\gamma = \frac{d_1 \alpha}{d_1 - k_1 + 1} = \frac{d_1 \alpha}{d - k + 1}
$$
giving $C^{\text{exact}}_{n_1,k_1,d_1}\left(\alpha,\frac{d_1 \alpha}{d - k + 1}\right)=k_1 \alpha$ and hence
$$
C^{\text{exact}}_{n,k,d}\left(\alpha,\frac{d_1 \alpha}{d - k + 1}\right) \geq (l-1) k_1 \alpha + C^{\text{exact}}_{n_l,k_l,d_l}(\alpha,\frac{d_1 \alpha}{d - k + 1}).
$$

By noticing that
$$
\alpha = \frac{(d-k+1)\gamma}{d_1} \geq \frac{(d-k+1)\gamma}{d_l}
$$
and then defining $\alpha_{\text{new}}=\frac{(d-k+1)\gamma}{d_l}=\frac{d_1 \alpha}{d_l}$, \emph{i.e.,}
$$
\gamma=\frac{d_l \alpha_{\text{new}}}{d-k+1}=\frac{d_l \alpha_{\text{new}}}{d_l-k_l+1}
$$
we find that
\begin{equation}
\begin{split}
C^{\text{exact}}_{n_l,k_l,d_l}\left(\alpha,\frac{d_1 \alpha}{d - k + 1}\right) & \geq C^{\text{exact}}_{n_l,k_l,d_l}\left(\alpha_{\text{new}},\frac{d_1 \alpha}{d - k + 1}\right) \\
& = C^{\text{exact}}_{n_l,k_l,d_l}\left(\alpha_{\text{new}},\frac{d_l \alpha_{\text{new}}}{d_l-k_l+1}\right) \\
& = k_l \alpha_{\text{new}} \\
& = \frac{k_l d_1 \alpha}{d_l}.
\end{split}
\end{equation}
Here the second to the last equality was again because of the fact that we know that the MSR point can be achieved by exact-regenerating codes.

In the calculation above giving the inequality $C^{\text{exact}}_{n_l,k_l,d_l}\left(\alpha,\frac{d_1 \alpha}{d - k + 1}\right) \geq \frac{k_l d_1 \alpha}{d_l}$ we just adapted the biggest possible MSR code when the upper bounds for node size and total repair bandwidth was given. The reason for this is that we are eager to give  a very simple construction by using already known MSR codes as building blocks.

So now we are ready to give a new lower bound for the capacity of exact-regenerating codes.
\begin{thm}
For integers $1 \leq l \leq \left\lfloor \frac{n}{n+1-k} \right\rfloor$ we have
\begin{equation}\label{toinenraja}
C^{\text{exact}}_{n,k,d}\left(\alpha,\frac{d_1 \alpha}{d - k + 1}\right) \geq \left((l-1) k_1 + \frac{k_l d_1}{d_l}\right) \alpha
\end{equation}
with
$$
k_1=\left\lfloor \frac{n}{l} \right\rfloor - n + k \text{ and } d_1=\left\lfloor \frac{n}{l} \right\rfloor - n + d
$$
and
$$
k_l=k - (l-1) \left\lfloor \frac{n}{l} \right\rfloor \text{ and } d_l=d - (l-1) \left\lfloor \frac{n}{l} \right\rfloor.
$$
\end{thm}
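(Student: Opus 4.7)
The plan is to collect the chain of reasoning already assembled in the paragraphs immediately preceding the statement into a clean proof that applies the general additive bound from Proposition (Inequality \ref{generallowerbound}) to one specific decomposition and then invokes MSR achievability twice. First I would set $n_j = \lfloor n/l \rfloor$ for $j = 1,\dots,l-1$ and $n_l = n - (l-1)\lfloor n/l \rfloor$, and verify the feasibility constraint $n_j \geq n-k+1$ for every $j$. The assumption $l \leq \lfloor n/(n+1-k) \rfloor$ gives $\lfloor n/l \rfloor \geq n-k+1$, which handles $j < l$; the last block satisfies $n_l \geq \lfloor n/l \rfloor$ (remainder in a near-equal partition), so the constraint is met for $j=l$ as well.

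Next I would invoke (\ref{generallowerbound}) with this decomposition to obtain
$$
C^{\text{exact}}_{n,k,d}(\alpha,\gamma) \geq (l-1)\,C^{\text{exact}}_{n_1,k_1,d_1}(\alpha,\gamma) + C^{\text{exact}}_{n_l,k_l,d_l}(\alpha,\gamma),
$$
and then specialize to $\gamma = \frac{d_1\alpha}{d-k+1}$. Since $d_1-k_1+1 = d-k+1$, this is exactly the MSR bandwidth of the $(n_1,k_1,d_1)$-system at node size $\alpha$, so the MSR achievability results cited in \cite{kumar,MSRequal,optimalMDS,yhteensaMDS} give $C^{\text{exact}}_{n_1,k_1,d_1}(\alpha,\gamma) = k_1\alpha$.

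For the last block, the trick is that the fixed $\gamma$ is \emph{more} than the MSR bandwidth of the $(n_l,k_l,d_l)$-system at node size $\alpha$, so I would rescale. Define $\alpha_{\text{new}} = \frac{d_1\alpha}{d_l}$; since $d_l \leq d_1$, one has $\alpha_{\text{new}} \leq \alpha$, and a direct computation gives $\gamma = \frac{d_l\alpha_{\text{new}}}{d_l-k_l+1}$, which is the MSR bandwidth for $(n_l,k_l,d_l)$ at node size $\alpha_{\text{new}}$. Hence $C^{\text{exact}}_{n_l,k_l,d_l}(\alpha_{\text{new}},\gamma) = k_l\alpha_{\text{new}} = \frac{k_l d_1\alpha}{d_l}$, and monotonicity of capacity in the node size yields $C^{\text{exact}}_{n_l,k_l,d_l}(\alpha,\gamma) \geq \frac{k_l d_1\alpha}{d_l}$. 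Adding the two contributions produces exactly the claimed bound.

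The steps are essentially routine once the decomposition is chosen; the only point that requires care is the feasibility check $n_j \geq n-k+1$, which is what constrains the range of $l$ in the statement, and the justification of monotonicity of $C^{\text{exact}}_{n_l,k_l,d_l}(\cdot,\gamma)$ in its first argument (which is clear since any code of node size $\alpha_{\text{new}}$ is also a code of node size $\alpha \geq \alpha_{\text{new}}$ after padding with zeros). I do not anticipate any real obstacle beyond bookkeeping.
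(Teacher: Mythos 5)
Your proposal is correct and reconstructs essentially the paper's own argument: decompose into near-equal blocks, apply the additive bound (\ref{generallowerbound}), use MSR achievability directly on the first $l-1$ blocks, and rescale the node size for the last block before invoking MSR achievability and monotonicity. One small slip: you write ``since $d_l \leq d_1$, one has $\alpha_{\text{new}} \leq \alpha$,'' but in fact $d_l - d_1 = n_l - n_1 = n - l\lfloor n/l \rfloor \geq 0$, so $d_l \geq d_1$; it is this inequality (the paper's $\frac{1}{d_1} \geq \frac{1}{d_l}$) that gives $\alpha_{\text{new}} = \frac{d_1\alpha}{d_l} \leq \alpha$, and your stated direction would yield the opposite conclusion.
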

\begin{proof}
By the above reasoning we have the inequality \ref{toinenraja} for given $l$ if we can split our $(n,k,d)$ storage system into $l$ pieces by the above way. This is possible if we have $1 \leq k_1 \leq d_1 \leq n_1-1$ and $1 \leq k_l \leq d_l \leq n_l-1$.

The first chain of inequalities is proved by noticing that
$$
d_1=n_1 - n + d \leq n_1-1,
$$
$$
k_1=n_1 - n + k =\left\lfloor \frac{n}{l} \right\rfloor - n + k \geq \left\lfloor \frac{n}{\frac{n}{n+1-k}} \right\rfloor - n + k = 1
$$
and
$$
d_1-k_1=(n_1 - n + d)-(n_1 - n + k)=d-k \geq 0.
$$

The second chain of inequalities is proved by noticing that
$$
d_l=n_l - (n - d) \leq n_l - 1,
$$
$$
k_l \geq k - \frac{(l-1)n}{l} = k - n + \frac{n}{l} \geq k - n + \frac{n}{\frac{n}{n+1-k}} = 1
$$
and
$$
d_l - k_l = (n_l - n + d) - (n_l - n + k) = d-k \geq 0.
$$
\end{proof}

Hence the performance of our construction is
\begin{equation}\label{eq:const2}
P^{2}_{n,k,d}\left(\alpha,\frac{d_1 \alpha}{d - k + 1}\right) = \left((l-1) k_1 + \frac{k_l d_1}{d_l}\right) \alpha
\end{equation}
for $1 \leq l \leq \left\lfloor \frac{n}{n+1-k} \right\rfloor$.

\begin{exam}
Let $(n,k,d)=(3,2,2)$. Suppose a system with the first node storing $x$, second node storing $y$ and third node storing $x+y$. This MSR-optimal code storing a file $(x,y)$ has node size $\alpha=1$ and total repair bandwidth $\gamma=2\beta=2$. Take three copies of this system to form a bigger system with nine nodes:
$$x_1, y_1, x_1+y_1, x_2, y_2, x_2+y_2, x_3, y_3, x_3+y_3.$$

Similarly as in our construction this is a storage system with $(n',k',d')=(9,8,8)$, node size $\alpha'=\alpha=1$, and total repair bandwidth $\gamma=2$. It stores a file $(x_1, y_1, x_2, y_2, x_3, y_3)$ of size $6$.
\end{exam}

\begin{table*}
  [ht]
  \caption{The performance of construction of Section \ref{easyconstructionsection}}
  \begin{tabular}{c c c}

      \includegraphics[width=5.0cm]{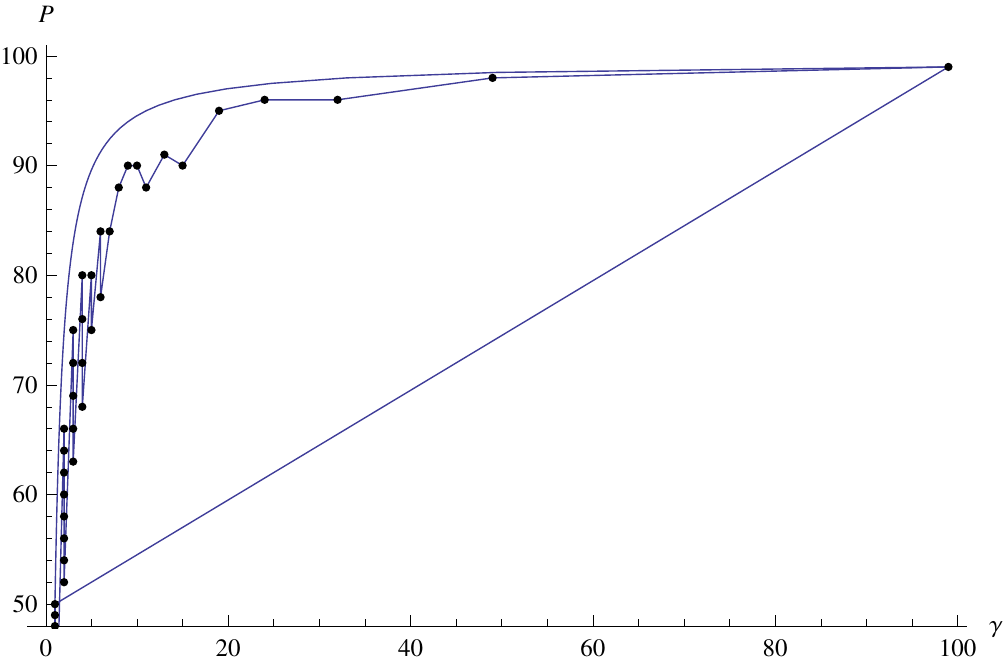} & \includegraphics[width=5.0cm]{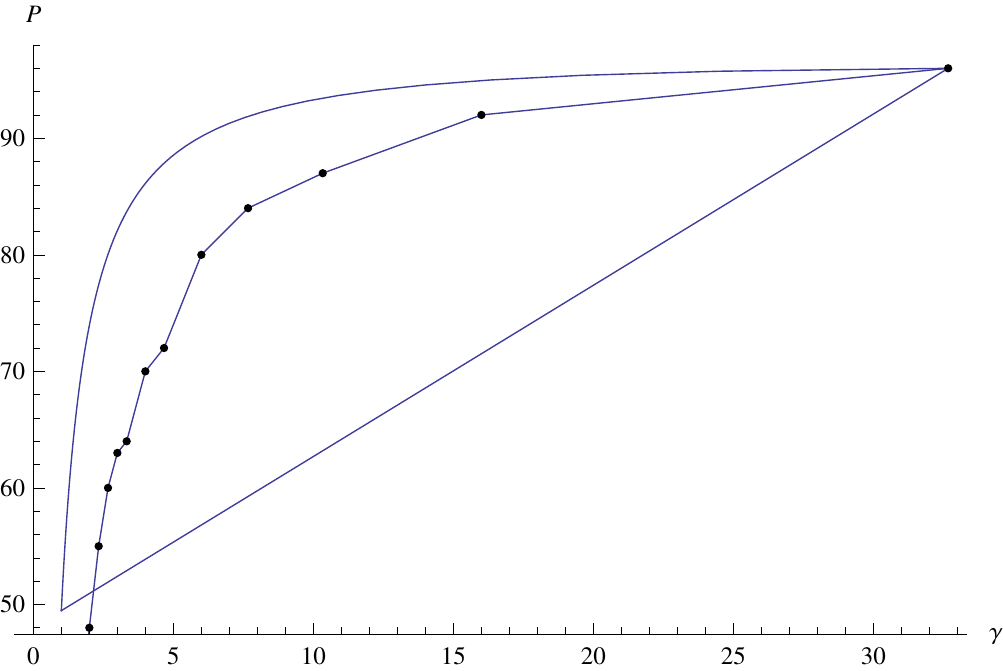} & \includegraphics[width=5.0cm]{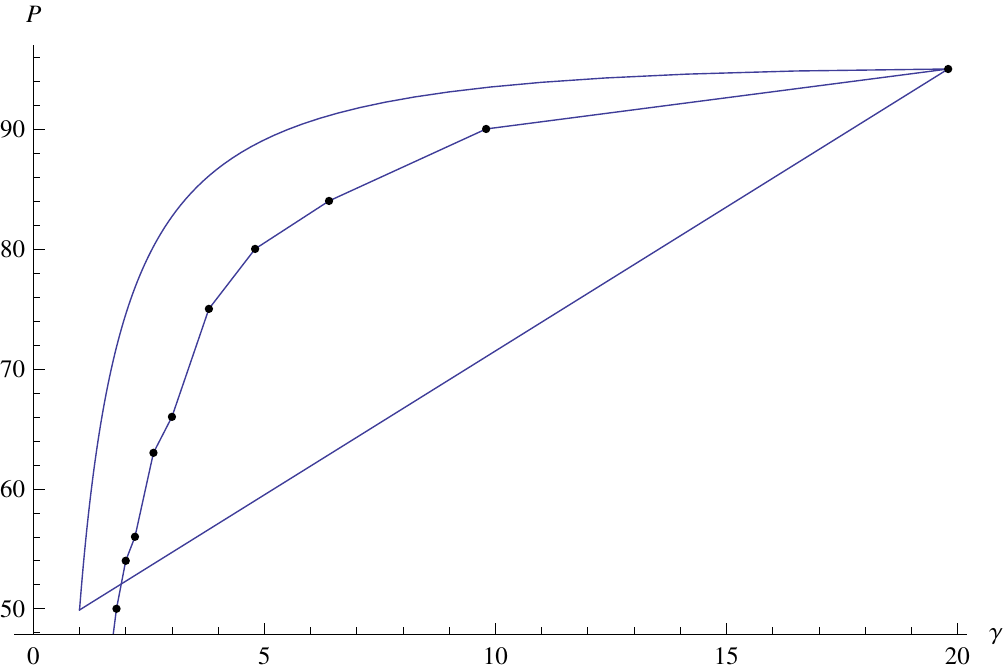} \\
      $(n,k,d)=(100,99,99)$ & $(n,k,d)=(100,96,98)$ & $(n,k,d)=(100,95,99)$ \\

      \includegraphics[width=5.0cm]{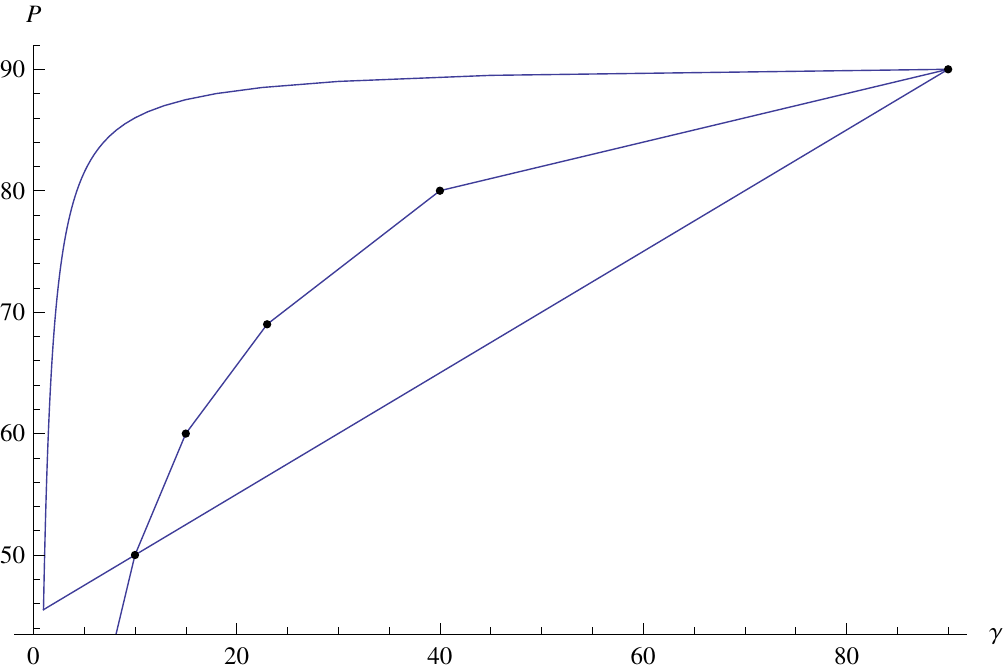} & \includegraphics[width=5.0cm]{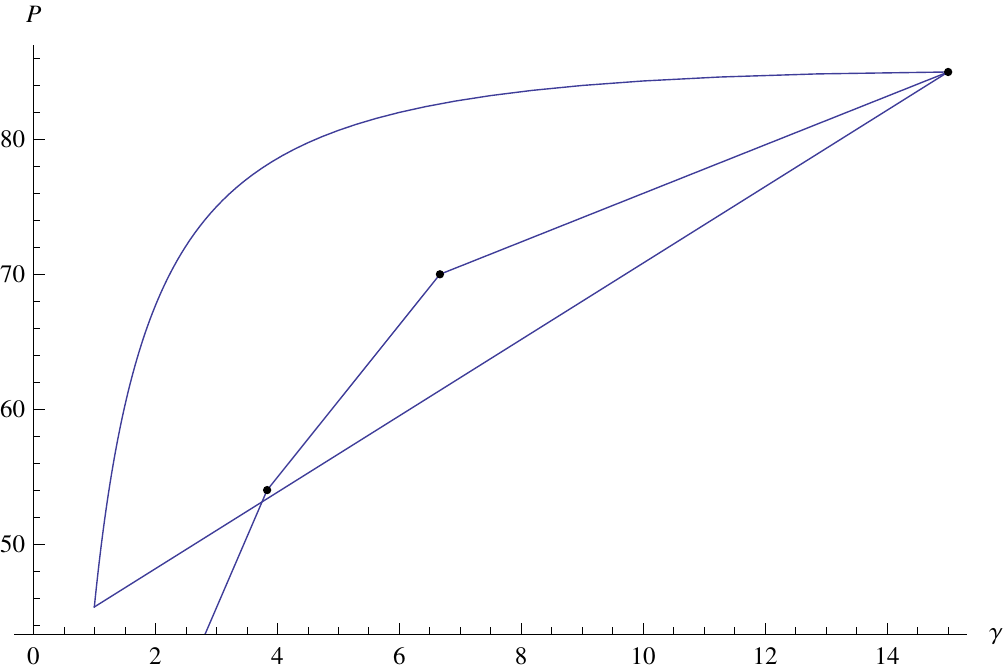} & \includegraphics[width=5.0cm]{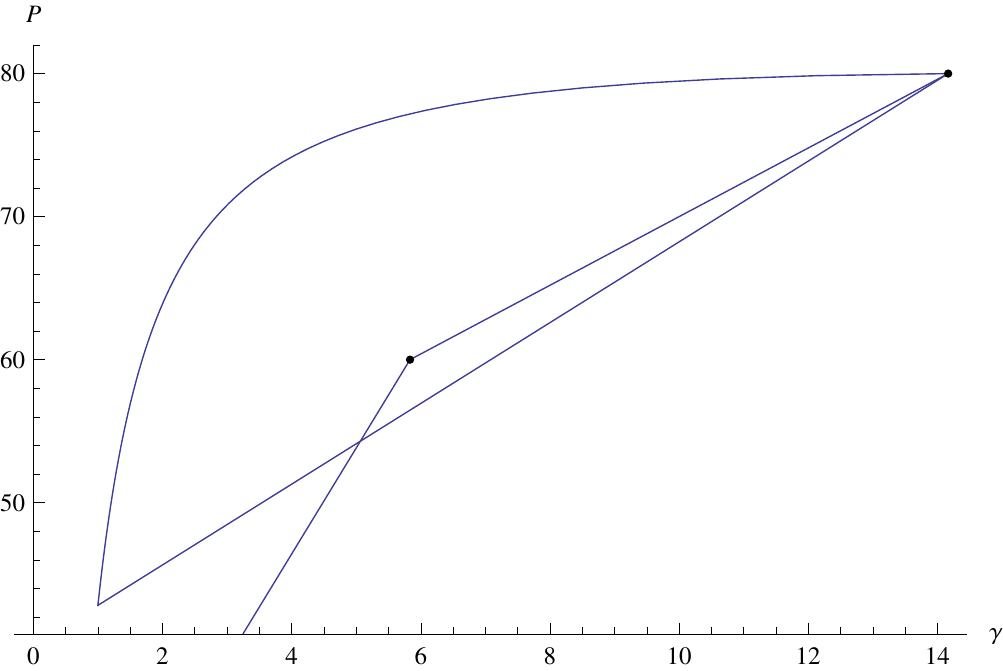} \\
      $(n,k,d)=(100,90,90)$ & $(n,k,d)=(100,85,90)$ & $(n,k,d)=(100,80,85)$ \\

  \end{tabular}
  \\[10pt]
The figures show the performance $P_{n,k,d}^{2}$ of codes from construction of Section \ref{easyconstructionsection} (dotted curve) between the capacity of functionally repairing codes (uppermost curve) and the trivial lower bound given by interpolation of the known MSR and MBR points with different $(n,k,d)$. Here $\alpha=1$, and $\gamma \in [1,\frac{d}{d-k+1}]$.
\label{tab:alternativeperformances}
\end{table*}

\subsection{Connection to the Construction of Section \ref{construction}}
Consider equality (\ref{eq:const2}) in the case $l$ divides $n$, \emph{i.e.}, $n=l n_1$. In that case we have $k_1=k_l$ and $d_1=d_l$ and hence
$$
P_{n,k,d}^{2}\left(\alpha,\frac{d_1 \alpha}{d - k + 1}\right) = l k_1 \alpha,
$$
\emph{i.e.},
\begin{equation}\label{comparison1}
P_{n,k,d}^{2}\left(\alpha,\frac{(\frac{n}{l}-n+d) \alpha}{d - k + 1}\right) = (n-nl+kl) \alpha.
\end{equation}

Let $j=k-n+\frac{n}{l}$. Since $1 \leq l \leq \left\lfloor \frac{n}{n+1-k} \right\rfloor$ we have $j \geq k-n+\frac{n}{\frac{n}{n+1-k}} = 1$ and $j \leq k-n+n = k$. Hence we can use Equation (\ref{performance1}) with this value. We get
\begin{equation}
\begin{split}
P_{n,k,d}^{1}\left(\alpha,\frac{(\frac{n}{l}-n+d) \alpha}{d - k + 1}\right) & = P_{n,k,d}^{1}\left(\alpha,\frac{(d-k+j) \alpha}{d - k + 1}\right) \\
 & = \frac{nj\alpha}{n-k+j} \\
 &  = \frac{n(k-n+\frac{n}{l})\alpha}{\frac{n}{l}} \\
 & = (n-nl+kl)\alpha \\
 & = P_{n,k,d}^{2}\left(\alpha,\frac{(\frac{n}{l}-n+d) \alpha}{d - k + 1}\right)
\end{split}
\end{equation}
so the performances $P_{n,k,d}^{1}$ and $P_{n,k,d}^{2}$ are same in this case.

This tells us that the performance of the construction of Section \ref{construction} and the performance of the construction of Section \ref{easyconstructionsection} are exactly the same whenever $l$ divides $n$, \emph{i.e.}, whenever the latter construction is built using optimal MSR codes of equal size $\frac{n}{l}$.

The explanation for the similarity of the performances of these two constructions is that the main idea of the both constructions is to increase values $k$ and $d$ but to restrain the values $\alpha$ and $\gamma$.

\section{Comparison to Similar Constructions}\label{similarconstructions}
The main idea in our construction of Section \ref{construction} was to add a new empty node in the storage system. The benefit of this was the reduction of the average node size and the average total repair bandwidth. The drawback was that we had to increase parameters $k$ and $d$. In this section we study what happens if we add something else than an empty node in the system. We try out what happens when adding an exact copy of some existing node and when adding the stored file itself.

We will see that these variations are not very useful. The performance of the construction of Subsection \ref{konstruktionodekopioinnilla} is moderate but the performance of the construction of Subsection \ref{konstruktiotiedostoilla} is not good. The key differences will be summarized in Subsection \ref{subsec:differencesummary}.

\subsection{Construction by Copying Nodes}\label{konstruktionodekopioinnilla}
Assume we have a storage system $DSS_1$ with exact repair for parameters $(n,k,d)$ with the node size $\alpha$ and the total repair bandwidth $\gamma=d\beta$. In this section we propose a construction that gives a new storage system for parameters $$(n'=n+l,k'=k+l,d'=d+l)$$ for integers $l=1,\dots,k-1$.  Let $DSS_1$ consist of the nodes $v_1,\dots,v_n$, and let the stored file $F$ be of maximal size $C^{\text{exact}}_{n,k,d}(\alpha,\gamma)$.

Let then $DSS_{1+}$ denote a new system consisting of the original storage system $DSS_1$ and $l$ extra nodes $v_{n+1},\dots,v_{n+l}$ such that $v_{n+j}$ is the exact copy of the node $v_j$ for $j=1,\dots,l$. It is clear that $DSS_{1+}$ is a storage system for parameters $$(n+l,k+l,d+l)$$ and can store the original file $F$.

Again we use permutations just similarly as in the construction of Section \ref{construction}: let $\{ \sigma_j | j=1,\dots, (n+l)! \}$ be the set of permutations of the set $\{ 1, \dots, n+l \}$. Assume that $DSS_{j}^{\text{new}}$ is a storage system for $j=1,\dots,(n+l)!$ corresponding to the permutation $\sigma_j$ such that $DSS_{j}^{\text{new}}$ is  exactly the same as $DSS_{1+}$ except that the order of the nodes is changed corresponding to the permutation $\sigma_j$, \emph{i.e.}, the $i$th node in $DSS_{1+}$ is the $\sigma_j(i)$th node in $DSS_{j}^{\text{new}}$.

Using these $(n+l)!$ new systems as building blocks we construct a new system $DSS_2$ such that its $j$th node for $j=1,\dots,n+l$ stores the $j$th node from each system $DSS_{i}^{\text{new}}$ for $i=1,\dots,(n+l)!$\,.

It is clear that this new system $DSS_2$ works for parameters $(n+l,k+l,d+l)$, has exact repair property, and stores a file of size $(n+l)! C^{\text{exact}}_{n,k,d}(\alpha,\gamma)$. The node size of the new system $DSS_2$ is
$$
\alpha_2=(n+l)! \alpha.
$$

When repairing a node there are $2l (d+l) (n+l-2)!$ subsystems in which the exact copy of the lost node is one of the helper nodes. Hence there are $(n+l)! - 2l (d+l) (n+l-2)!$ subsystems in which this not the case. So the total repair bandwidth is
$$
\gamma_2  = 2l (d+l) (n+l-2)! \alpha + ((n+l)!-2l (d+l)(n+l-2)!) \gamma
$$

Hence the performance of this incomplete construction is
$$
P^{\text{3, in progress}}_{n+l,k+l,d+l}(\alpha_2,\gamma_2) = (n+l)! C^{\text{exact}}_{n,k,d}(\alpha,\gamma)
$$
that is
\begin{equation}
P^{\text{3, in progress}}_{n+l,k+l,d+l}(\alpha,\gamma_3) = C^{\text{exact}}_{n,k,d}(\alpha,\gamma)
\end{equation}
for $l=1,\dots,k-1$ with $\gamma_3=\frac{\gamma}{(n+l)!}$, that is,
$$
\gamma_3=\frac{2l (d+l)}{(n+l)(n+l-1)} \cdot \alpha + (1-\frac{2l (d+l)}{(n+l)(n+l-1)}) \gamma.
$$

By the change of variables ($n'=n+l,k'=k+l,d'=d+l$) we obtain
\begin{equation}
P^{\text{3, in progress}}_{n,k,d}(\alpha,\gamma_4) = C^{\text{exact}}_{n-l,k-l,d-l}(\alpha,\gamma)
\end{equation}
for $l=1,\dots,\lfloor \frac{k-1}{2} \rfloor$ with
$$
\gamma_4=\frac{2l d}{n(n-1)} \cdot \alpha + (1-\frac{2l d}{n(n-1)}) \gamma.
$$

Finish again the construction by using MSR-optimal codes as a starting point. The performance we obtain is
\begin{equation}
P_{n,k,d}^{3}(\alpha,\gamma_4) = (k-l) \alpha
\end{equation}
with
$$
\gamma_4 = \left( \frac{2l d}{n(n-1)} + (1-\frac{2l d}{n(n-1)}) \frac{(d-l)}{d-k+1} \right) \alpha.
$$

\begin{table}
  [ht]
  \caption{The performance of the construction of Section \ref{konstruktionodekopioinnilla}}
  \begin{tabular}{c}

      \includegraphics[width=8cm]{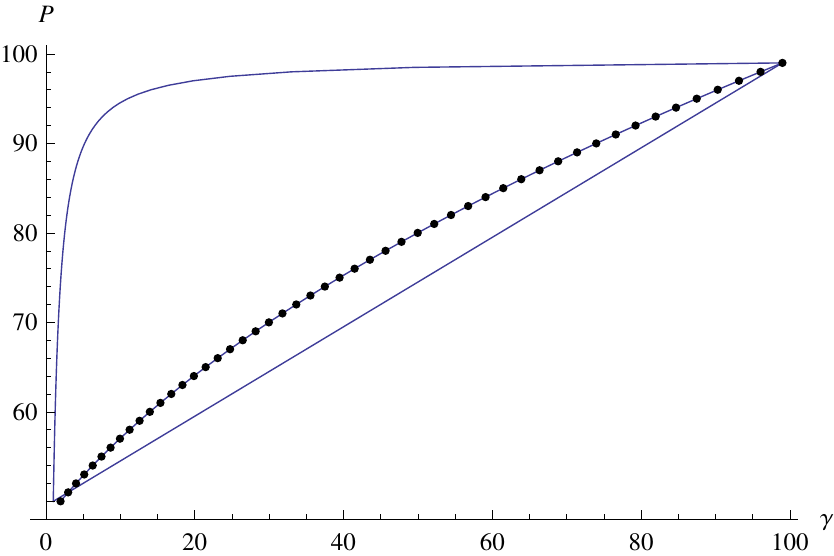}  \\
      $(n,k,d)=(100,99,99)$ \\

      \includegraphics[width=8cm]{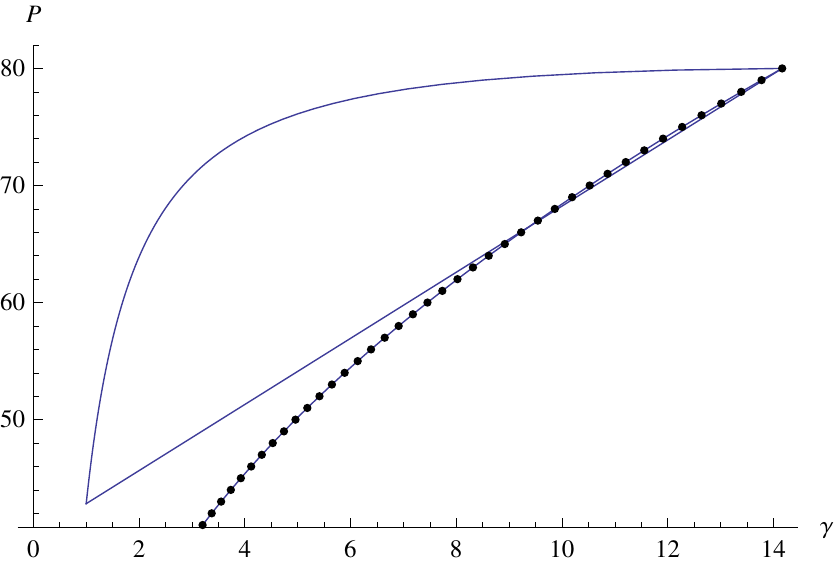}  \\
      $(n,k,d)=(100,80,85)$ \\

  \end{tabular}
  \\[10pt]
   The figure shows the performance $P_{n,k,d}^{3}$ of codes from the construction of Subsection \ref{konstruktionodekopioinnilla} (dotted curve) between the capacity of functionally repairing codes (uppermost curve) and the trivial lower bound given by interpolation of the known MSR and MBR points with different $(n,k,d)$.
   \label{tab:nodecopyperformances}
\end{table}

\subsection{Construction by Adding the File}\label{konstruktiotiedostoilla}
Assume we have a storage system $DSS_1$ with exact repair for parameters $(n,k,d)$ with the node size $\alpha$ and the total repair bandwidth $\gamma=d\beta$. In this section we propose a construction that gives a new storage system for parameters $$(n'=n+1,k'=k,d'=d).$$ Let $DSS_1$ consist of the nodes $v_1,\dots,v_n$, and let the stored file $F$ be of maximal size $C^{\text{exact}}_{n,k,d}(\alpha,\gamma)$.

Let then $DSS_{1+}$ denote a new system consisting of the original storage system $DSS_1$ and one extra node $v_{n+1}$ storing the whole file $F$. It is clear that $DSS_{1+}$ is a storage system for parameters $$(n+1,k,d)$$ and can store the original file $F$.

Again we use permutations just similarly as in the construction of Section \ref{construction}: let $\{ \sigma_j | j=1,\dots, (n+1)! \}$ be the set of permutations of the set $\{ 1, \dots, n+1 \}$. Assume that $DSS_{j}^{\text{new}}$ is a storage system for $j=1,\dots,(n+1)!$ corresponding to the permutation $\sigma_j$ such that $DSS_{j}^{\text{new}}$ is  exactly the same as $DSS_{1+}$ except that the order of the nodes is changed corresponding to the permutation $\sigma_j$, \emph{i.e.}, the $i$th node in $DSS_{1+}$ is the $\sigma_j(i)$th node in $DSS_{j}^{\text{new}}$.

Using these $(n+1)!$ new systems as building blocks we construct a new system $DSS_2$ such that its $j$th node for $j=1,\dots,n+1$ stores the $j$th node from each system $DSS_{i}^{\text{new}}$ for $i=1,\dots,(n+1)!$\,.

It is clear that this new system $DSS_2$ works for parameters $(n+1,k,d)$, has exact repair property, and stores a file of size $(n+1)! C^{\text{exact}}_{n,k,d}(\alpha,\gamma)$. By noticing that there are $n!$ such permutated copies $DSS_{j}^{\text{new}}$ where the $i$th node is storing the whole file we get that the node size of the new system $DSS_2$ is
$$
\alpha_2=((n+1)!-n!)\alpha+n!C^{\text{exact}}_{n,k,d}(\alpha,\gamma)=n! (n \alpha+C^{\text{exact}}_{n,k,d}(\alpha,\gamma))
$$
Since to repair a node storing the whole file can be done by bandwidth of size $k\alpha$ and repairing a node when the whole file is one of the helper nodes requires bandwidth $\alpha$, we find that the total repair bandwidth of the new system is
\begin{equation}
\begin{split}
\gamma_2 = & n (n-1) \cdots (n-d) \cdot (n-d)! \gamma \\
& + n d  (n-1) \cdots (n-d+1) \cdot (n-d)! \alpha \\
& + n! k\alpha \\
 = & n! ((n-d) \gamma + (d+k) \alpha )
\end{split}
\end{equation}

Hence the performance of this incomplete construction is
$$
P^{\text{4, in progress}}_{n+1,k,d}(\alpha_2,\gamma_2) = (n+1)! C^{\text{exact}}_{n,k,d}(\alpha,\gamma)
$$
that is
\begin{equation}
\begin{split}
& P^{\text{4, in progress}}_{n+1,k,d}(n \alpha+C^{\text{exact}}_{n,k,d}(\alpha,\gamma),(n-d) \gamma + (d+k)\alpha) \\
= & (n+1) C^{\text{exact}}_{n,k,d}(\alpha,\gamma).
\end{split}
\end{equation}

Substituting MSR point into above gives a code with performance
$$
P^{4}_{n+1,k,d}\left((n+k)\alpha,\left(\frac{d(n-d)}{d-k+1} + (d+k)\right)\alpha\right) =(n+1)k\alpha
$$
\emph{i.e.}
$$
P_{n+1,k,d}^{4}\left(\alpha,\frac{(nd+d-k^2+k)\alpha}{(n+k)(d-k+1)}\right) =  \frac{(n+1)k\alpha}{n+k}.
$$
However, this construction is useless because it is easy to verify that this performance is strictly worse than the trivial lower bound by timesharing when $d>k$ and it lies on the timesharing line when $k=d$.

\subsection{Summary of Differences of Different Approaches}\label{subsec:differencesummary}
Despite the clear similarities of the construction techniques, there is a huge difference on the performances $P_{n,k,d}^{1}(\alpha,\gamma)$, $P_{n,k,d}^{3}(\alpha,\gamma)$, and $P_{n,k,d}^{4}(\alpha,\gamma)$ of codes constructed using these different approaches.

In the cases where the performance $P_{n,k,d}^{1}(\alpha,\gamma)$ of the construction of Section \ref{construction} is very poor, the construction of Section \ref{konstruktiotiedostoilla} performs better. However, the performance $P_{n,k,d}^{4}(\alpha,\gamma)$ of the construction of Section \ref{konstruktiotiedostoilla} is still worse than the one achieved by timesharing of MSR and MBR points.

Comparing to the trivial lower bound given by timesharing MBR and MSR points one can summarize that the construction of Subsection \ref{konstruktiotiedostoilla} is useless, the construction of Subsection \ref{konstruktionodekopioinnilla} is in certain cases quite good, and the construction of Section \ref{construction} is in certain cases very good.

\section{Conclusions}
We have constructed exact-regenerating codes between MBR and MSR points. To the best of author's knowledge, no previous construction of exact-regenerating codes between MBR and MSR points is done except in \cite{exactrepairtian}. Compared to that construction, our construction is very different.

We have shown in this paper that when $n$, $k$, and $d$ are close to each other, the capacity of a distributed storage system when exact repair is assumed is essentially the same as when only functional repair is required. This was proved by using a specific code construction exploiting some already known codes achieving the MSR point on the trade-off curve and by studying the asymptotic behavior of the capacity curve.

A very easy alternative construction for the main construction of this paper was presented. Its performance is almost as good as the performance of the main construction and it is simple to build up. The drawback of this construction was that it has no symmetric repair.

Also we have constructed two constructions in a similar manner as the main construction. These were to be compared to the main construction. Despite the clear similarities of these three constructions their performances vary hugely.

However, when $n$, $k$, and $d$ are not close to each other then the performance of our main construction is not good when compared to the capacity of functionally repairing codes. However, there is no evidence that the capacity of a distributed storage system when exact repair is assumed is generally close to the capacity of functionally repairing codes. So as a future work it is still left to find the precise expression of the capacity of a distributed storage system when exact repair is assumed, and especially to study the behavior of the capacity when $n$, $k$, and $d$ are not close to each other.

\section{Acknowledgments} This research was partly supported by the Academy of Finland (grant \#131745) and by the Emil Aaltonen Foundation, Finland, through grants to Camilla Hollanti.

Prof. Salim El Rouayheb at the Illinois Institute of Technology is gratefully acknowledged for useful discussions. Prof. Camilla Hollanti at the Aalto University is gratefully acknowledged for useful comments on the first draft of this paper.

\section*{Appendix}\label{proofs}
\begin{proof}[The proof of Theorem \ref{asymptotic}]
Let $i=1+s(k_M-1)$. We study the behavior of the fraction for large $M$, so we have $\frac{\lfloor i \rfloor}{i} \approx 1$. Thus, to simplify the notation,  we may assume that $i$ acts as an integer. We also use the notation
$$
t=\frac{d_Ms(k_M-1)}{d-k+1+s(k_M-1)}.
$$

We have
\begin{equation}
\begin{split}
& P_{n_M,k_M,d_M}^{1}\left(\alpha,\frac{(d_M-k_M+(1+s(k_M-1)))\alpha}{d_M-k_M+1}\right) \\
= & \frac{n_M (1+s(k_M-1)) \alpha}{n-k+i}
\end{split}
\end{equation}
and
\begin{equation}
\begin{split}
& C_{k_M,d_M}\left(\alpha,\frac{(d_M-k_M+i)\alpha}{d_M-k_M+1}\right) \\
= & \alpha \left( \sum_{j=0}^{t} 1 + \sum_{j=t+1}^{k_M-1} \frac{d_M-j}{d_M} \cdot \frac{d-k+i}{d-k+1} \right) \\
= & \alpha  \left( t+1 +  \frac{(k_M-t-1)(2d+M-k-t)(d-k+i)}{2 d_M (d-k+1)} \right), \\
\end{split}
\end{equation}
whence
\begin{equation}
\begin{split}
& \frac{P_{n_M,k_M,d_M}^{1}\left(\alpha,\frac{(d_M-k_M+i)\alpha}{d_M-k_M+1}\right)}{C_{k_M,d_M}\left(\alpha,\frac{(d_M-k_M+i)\alpha}{d_M-k_M+1}\right)} \\
= & \frac{h_1(M)}{h_2(M)(h_3(M)+h_4(M))}, \\
\end{split}
\end{equation}
where
$$
h_1(M)=2n_M (1+s(k_M-1)) d_M (d-k+1),
$$
$$
h_2(M)=n-k+1+s(k_M-1),
$$
$$
h_3(M)=2(t+1)d_M(d-k+1),
$$
and
$$
h_4(M)=(k_M-t-1)(2d-k+M-t)(d-k+1+s(k_M-1)).
$$

Now it is easy to check that
$$
\frac{h_1(M)}{M^3} \rightarrow 2s(d-k+1),
$$
$$
\frac{h_2(M)}{M} \rightarrow s,
$$
and
$$
\frac{h_3(M)}{M^2} \rightarrow 2(d-k+1)
$$
as $M \rightarrow \infty.$

Note that $$M-t \approx \frac{d-k+1-ds}{s}$$ when $M$ is large and hence
\begin{equation}
\begin{split}
& \frac{h_4(M)}{M^2} \\
= & \frac{(k_M-t-1)(2d-k+M-t)}{M} \cdot \frac{d-k+1+s(k_M-1)}{M} \\
\rightarrow & 0 \cdot s=0 \\
\end{split}
\end{equation}
as $M \rightarrow \infty.$

Finally,
\begin{equation}
\begin{split}
& \frac{P_{n_M,k_M,d_M}^{1}\left(\alpha,\frac{(d_M-k_M+i)\alpha}{d_M-k_M+1}\right)}{C_{k_M,d_M}\left(\alpha,\frac{(d_M-k_M+i)\alpha}{d_M-k_M+1}\right)} \\
= & \frac{\frac{h_1(M)}{M^3}}{\frac{h_2(M)}{M} \cdot \frac{h_3(M)+h_4(M)}{M^2}} \\
\rightarrow & \frac{2s(d-k+1)}{s(2(d-k+1)+0)}=1 \\
\end{split}
\end{equation}
as $M \rightarrow \infty$, proving the claim.

\end{proof}

%\bibliographystyle{IEEE}	
%\bibliography{myrefs_special}	

\end{document}